\newcommandx{\unsure}[2][1=]{\todo[linecolor=red,backgroundcolor=red!25,bordercolor=red,#1]{#2}}
\newcommandx{\toself}[2][1=]{\todo[linecolor=blue,backgroundcolor=blue!25,bordercolor=blue,#1]{#2}}
\newcounter{cases}
\newcounter{subcases}
\newenvironment{mycases}
  {%
    \setcounter{cases}{0}%
    \def\case
      {%
        \par\noindent
        \refstepcounter{cases}%
        \textbf{Case \thecases.}
      }%
  }
  {%
    \par
  }
\newenvironment{subcases}
  {%
    \setcounter{subcases}{0}%
    \def\subcase
      {%
        \par\noindent
        \refstepcounter{subcases}%
        \textit{Subcase (\thesubcases):}
      }%
  }
  {%
  }
\renewcommand*\thecases{\arabic{cases}}
\renewcommand*\thesubcases{\roman{subcases}}
\colorlet{MyColorOne}{green!50}
\colorlet{MyColorOneDark}{green!50!black}
\colorlet{MyColorTwo}{blue!60}
\newtheorem{theorem}{Theorem}  
\newtheorem{lemma}{Lemma}
\newtheorem{corollary}{Corollary}
\newtheorem{definition}{Definition}
\newtheorem{observation}{Observation}
\Crefname{conjecture}{Conjecture}{Conjectures}
\Crefname{algorithm}{Algorithm}{Algorithm}
\preto\tabular{\setcounter{magicrownumbers}{0}}
\newcounter{magicrownumbers}
\def\tw{\text{tw}}
\definecolor{orcidlogocol}{HTML}{A6CE39}
\journal{Discrete Mathematics}
\begin{document}

\begin{frontmatter}

\title{
Safe Edges: A Study of Triangulation in Fill-in and Tree-Width Problems
}


\author[UoP]{Janka Chleb\'\i{}kov\'a}
\ead{janka.chlebikova@port.ac.uk}

\author[UoP]{Mani Ghahremani\corref{mycorrespondingauthor}
\orcidlink{0000-0001-7693-2048}
}
\cortext[mycorrespondingauthor]{Corresponding author}
\ead{mani.ghahremani@port.ac.uk}

\address[UoP]{School of Computing, University of Portsmouth, United Kingdom}

\begin{abstract}
This paper considers two well-studied problems \textsc{Minimum Fill-In} (\textsc{Min Fill-In}) and \textsc{Treewidth}.
Since both problems are \textsf{NP}-hard, various reduction rules simplifying an input graph have been intensively studied to better understand the structural properties relevant to these problems.
Bodlaender at el.\ \cite{minDegree} introduced the concept of a safe edge that is included in a solution of the \textsc{Minimum Fill-In} problem and showed some initial results.
In this paper, we extend their result and prove a new condition for an edge set to be safe.
This in turn helps us to construct a novel reduction tool for \textsc{Min Fill-In} that we use to answer other questions related to the problem.

In this paper, we also study another interesting research question: Whether there exists a triangulation that answers both problems \textsc{Min Fill-In} and \textsc{Treewidth}.
To formalise our study, we introduce a new parameter reflecting a distance of triangulations optimising both problems.
We present some initial results regarding this parameter and study graph classes where both problems can be solved with one triangulation.
\end{abstract}

\begin{keyword}
Fill-in \sep Chordal Triangulation \sep Treewidth \sep Minimum Triangulation \sep Elimination Ordering
\MSC[2020]  05C75 \sep 05C85
\end{keyword}

\end{frontmatter}


\section{Introduction}
{
The minimum fill-in and treewidth of graphs are well studied graph parameters with many practical applications \cite{minimalTriangulationSurvey}.
The minimum fill-in of a graph $G$, $\textit{mfi}(G)$, is the minimum number of edges that triangulate it.
The treewidth parameter, $\textit{tw}(G)$, is equal to the minimum clique size over all triangulations of the input graph, subtracted by one.
This naturally leads to the interesting  question of whether there exists a triangulation that minimizes both parameters at the same time. 
Clearly, one such triangulation answers both NP-hard problems \textsc{Minimum Fill-In}  and \textsc{Treewidth} simultaneously.

To formalise our study, we introduce a new parameter $\tau$.

\begin{definition}\label{def:tau}
Let $G$ be a graph and $H$ be a minimum triangulation of $G$ chosen so that its clique size is minimized.
Define $\tau(G)=\tw(H)-\tw(G)$.
\end{definition}

Knowing that $\tau(G)=0$ for a graph $G$, implies that there exists a minimum triangulation of $G$ among those determining the treewidth and vice versa.
Conversely, distinct graph classes have been constructed where one triangulation cannot solve both problems \cite{kloks1994treewidth,manciniPhd,Dar2019}.
In such graphs the $\tau$ parameter can be arbitrarily large; see, for example, the construction from \cite{myPhd} that explicitly uses the parameter $\tau$.
To study this parameter, we design methods that can be useful in determining the value of $\tau$ as well as the minimum fill-in of graphs.

In \cref{sec:defs} we provide the necessary definitions and some properties of elimination orderings, as well as a reduction rule that can be useful for the \textsc{Minimum Fill-In} problem.
Next in \cref{sec:safe}, we extend the notion of safe edges while constructing a minimum triangulation of a graph, a concept that was introduced in \cite{minDegree}.
In \cref{sec:tfm:Ktrees}, we present initial study of the $\tau$ parameter on graphs with low treewidth and graphs with the same value of treewidth and vertex connectivity.
}

\section{Preliminaries}\label{sec:defs}

{
The used terminology  is consistent with textbooks such as \cite{Introduction2,Introduction1}.
We assume all graphs $G=(V,E)$ to be finite, simple, connected, and undirected.
For a vertex $v\in V$, the neighbourhood of $v$, $N(v)$, refers to the set of vertices adjacent to $v$.
The closed neighbourhood of $v$ is $N[v]=N(v)\cup \{v\}$
and the degree of $v$ is $\textit{deg}(v)=|N(v)|$.
For a vertex set $X\subseteq V$, we use $G[X]$ to denote the graph induced on $X$.
A vertex $v\in V$ is \textit{universal} in $G$ if $N(v)=V\setminus\{v\}$.
Given a graph $G$ and a vertex set $X\subseteq V$, $G-X$ refers to the graph obtained from $G$ after the \textit{removal} of $X$ and we use $G-v$, as a shorthand for $G-\{v\}$, for a vertex $v\in V$.
The set of \textit{missing edges}, or \textit{fill edges}, of a vertex set $X\subseteq V$ is defined as $\textit{fill}(X)=\{uv|u,v\in X, uv\notin E\}$.
For a set of missing edges $F$ we refer to the graph obtained by \textit{addition} of $F$ to $G$ as $G\oplus F$, which is a shorthand for the graph $(V, E\cup F)$. 
Given a vertex $v\in V$, \textit{elimination} of $v$ equates to the addition of the missing edges in neighbourhood of $v$,  $\textit{fill}(N(v))$, to $G$ and then removal of $v$ in the resulting graph, formally $(G - v) \oplus \textit{fill}(N(v))$.
A vertex set $Y \subseteq V$ is \textit{almost clique} in $G$ if there exists a vertex $x\in Y$ such that $Y\setminus \{x\}$ is a clique.
A vertex is called \textit{(almost) simplicial} if its neighbourhood is a (almost) clique.

A \textit{chord} is an edge connecting two non-consecutive vertices of a cycle, and a graph is \textit{chordal} iff every cycle with at least four vertices has a chord.
Given a graph $G$, a chordal supergraph $H=(V, E\cup F)$ is called a \textit{triangulation} of $G$, where $F$ is a set of added chords.
A triangulation $H'=(V,E\cup F')$ is a \textit{minimal triangulation} of $G$ if for every proper subset of edges $F''\subset F'$, the graph $H''=(V,E\cup F'')$ is not chordal.
Similarly, a triangulation $H^*=(V,E\cup F^*)$ is a \textit{minimum triangulation} if $|F^*|$ is the minimum number of fill edges that triangulate $G$.
The \textit{minimum fill-in} parameter of $G$ is defined as $\textit{mfi}(G)=|F^*|$.

Given the graph $G$, the \textsc{Minimum Fill-In} problem is answered by determining the value of $\textit{mfi}(G)$.
The \textsc{Treewidth} problem can be answered by finding a triangulation $H$ of the input graph $G$ where its clique size, $\omega(H)$, is minimized.
Then the treewidth of the graph, $\textit{tw}(G)$, is defined as $\textit{tw}(G)=\omega(H)-1$.
However, the treewidth parameter can be defined in many ways; we recommend \cite{bodLaenderTW} for a complete introduction.
}

{
The \textit{vertex-connectivity} (or simply connectivity) of $G$, $\kappa(G)$, is the smallest number of vertices whose removal disconnects $G$.
As an exception, the connectivity of a complete graph on $k$ vertices is defined as $\kappa(K_k)=k-1$, for $k\geq 1$.
Given two non-adjacent vertices $a,b\in V$, a vertex set $S\subseteq V\setminus\{a,b\}$ is an $a,b$-separator in $G$, or a separator for short, if $a$ and $b$ are contained in different components of $G-S$.
For a separator $S\subset V$, we use $\mathcal{C}(S)$ to refer to the set of connected components of $G-S$.
An $a,b$-separator $S$ is a \textit{minimal} separator in $G$, if no proper subset $S'\subset S$ is also an $a,b$-separator.
A (minimal) separator $S$ in $G$ is a clique (minimal) separator if $S$ is a clique in $G$.

The following is a well-known property of clique minimal separators (discussed in \cite{cliqueminimalseparatordecomposition}) that is essential to prove some of our later findings in this paper.

\begin{theorem}
\label{lem:cliqueMinimalSeparator}
Let $G=(V,E)$ be a graph and $S$ a clique minimal separator in $G$. 
Then:
$$\textit{mfi}(G)=\sum\nolimits_{C\in \mathcal{C}(S)}\textit{mfi}(G[S\cup V_C]) $$
\end{theorem}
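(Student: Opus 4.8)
The plan is to establish the two inequalities $\textit{mfi}(G) \ge \sum_{C \in \mathcal{C}(S)} \textit{mfi}(G[S \cup V_C])$ and $\textit{mfi}(G) \le \sum_{C \in \mathcal{C}(S)} \textit{mfi}(G[S \cup V_C])$ separately, relying on two standard facts about chordal graphs: every induced subgraph of a chordal graph is chordal, and the clique-sum of chordal graphs (graphs glued along a common clique) is chordal. Throughout I use that $S$, being a clique in $G$, contains no missing edge, so no fill edge ever lies inside $S$.

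For the lower bound I would start from a minimum triangulation $H^* = G \oplus F^*$ with $|F^*| = \textit{mfi}(G)$. For each component $C \in \mathcal{C}(S)$, let $F_C^*$ be the set of edges of $F^*$ whose endpoints both lie in $S \cup V_C$. Then $H^*[S \cup V_C] = G[S \cup V_C] \oplus F_C^*$ is an induced subgraph of the chordal graph $H^*$, hence chordal, and it contains $G[S \cup V_C]$; thus it is a triangulation of $G[S \cup V_C]$ and $|F_C^*| \ge \textit{mfi}(G[S \cup V_C])$. The crucial observation is that the sets $F_C^*$ are pairwise disjoint: two distinct blocks $S \cup V_{C_1}$ and $S \cup V_{C_2}$ share only the vertices of $S$, and $S$ carries no fill edge. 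Consequently $\sum_C |F_C^*| \le |F^*|$, and combining yields $\sum_C \textit{mfi}(G[S \cup V_C]) \le \sum_C |F_C^*| \le |F^*| = \textit{mfi}(G)$.

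For the upper bound I would go the other way: pick for each component $C$ a minimum triangulation $H_C = G[S \cup V_C] \oplus F_C$ with $|F_C| = \textit{mfi}(G[S \cup V_C])$, and form $H = G \oplus \bigcup_C F_C$. Since every $F_C$ is confined to its own block and distinct blocks meet only in $S$, the added edge sets are pairwise disjoint, so the total number of added edges equals $\sum_C |F_C| = \sum_C \textit{mfi}(G[S \cup V_C])$. It then remains only to verify that $H$ is chordal; granting this, $H$ is a triangulation of $G$ and the desired bound $\textit{mfi}(G) \le \sum_C \textit{mfi}(G[S \cup V_C])$ follows.

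The main obstacle is the chordality of $H$ in the upper bound; this is the clique-sum argument and the one place where the hypothesis that $S$ is a clique is genuinely used. I would argue it by contradiction. Because every added edge stays inside a single block, no edge of $H$ joins two different components, so $S$ still separates the components of $G - S$ in $H$ and $H[S \cup V_C] = H_C$ for each $C$. Take any cycle of length at least four in $H$. If all of its vertices lie in a single block $S \cup V_C$, it has a chord since $H_C$ is chordal. Otherwise the cycle meets two distinct components; as it cannot pass directly between components, it must traverse $S$ on both sides, and hence contains two vertices of $S$ that are non-consecutive on the cycle. Since $S$ is a clique these two vertices are adjacent, giving a chord. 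Thus $H$ has no chordless cycle of length at least four and is chordal, which completes the proof.
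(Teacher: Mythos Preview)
Your argument is correct and is the standard proof of this well-known decomposition result. Note, however, that the paper does not give its own proof of this theorem: it is stated there as a known property of clique minimal separators and attributed to the literature (the reference on clique minimal separator decomposition), so there is no in-paper proof to compare against. Your write-up could serve as a self-contained justification; the only place worth tightening is the chordality step in the upper bound, where you might make explicit that a cycle touching two distinct components contains, on each of the two arcs between a $C_1$-vertex and a $C_2$-vertex, an internal vertex of $S$, and these two $S$-vertices are therefore non-consecutive on the cycle.
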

}

{
An \textit{elimination ordering} $\alpha$ over a graph $G$ is a bijection $\alpha:\{1,\dots,|V|\} \rightarrow V$.
It is shown in \cite{fulkerson1965} that any given elimination ordering $\alpha$ over a graph $G$ can be used to define a triangulation $G^+_\alpha$ as outlined below:

\begin{definition}\label{def:elimGalpha}
Given an elimination ordering $\alpha$ over the graph $G=(V,E)$ we define the supergraph $G^+_\alpha=(V,E\cup F)$ using the sequence of graphs $G_0,\dots,G_{|V|}$: 
\begin{itemize}
    \item[$\bullet$] Let $G_0=G$.
    \item[$\bullet$] For each step $i\in\{1,\dots,|V|\}$: 
    let $\alpha(i)$ be a vertex of $G_{i-1}$ and denote
    $F_i=\textit{fill}_{G_{i-1}}\,(\,N_{G_{i-1}}\,(\alpha(i))\,)$. 
    Then define the graph obtained from $G_{i-1}$ after eliminating $\alpha(i)$ as $G_{i}=(G_{i-1}\oplus F_i)-\alpha(i)$.
\end{itemize}  

\noindent Finally, define $F=\bigcup_{i\in\{1,\dots,|V|\}}F_i$.
\end{definition}

Given a graph $G=(V,E)$ and an elimination ordering $\alpha$ over $G$,
we write $G^\alpha_i$ to refer to the graph obtained at the step $i\in \{1,\dots,|V|\}$ of $\alpha$.
For a vertex $v\in V$, we define $\textit{madj}^+_\alpha(v)=\{u|u\in N_{G^+_\alpha}(v),\alpha^{-1}(u)>\alpha^{-1}(v)\}$.
Following \cref{def:elimGalpha}, we note that $\textit{madj}^+_\alpha\,(\alpha(i))=N_{G_{i-1}}\,(\alpha(i))$ for any step $i\in\{1,\dots,|V|\}$.

An elimination ordering $\alpha$ over the graph $G$ is a \textit{perfect elimination ordering}, peo for short, if for every step $i\in\{1,\dots,|V|\}$, $\alpha(i)$ is a simplicial vertex in $G_{i-1}$, therefore $G^+_{\alpha}=G$.
An elimination ordering $\alpha$ over $G$ is \textit{minimal} if $G^+_{\alpha}$ is a minimal triangulation of $G$.
In \cite{eliminationorderingMinimaltraingulation} it has been shown that for every minimal triangulation $H$ of the graph $G$, there exists a minimal elimination ordering $\alpha$ over $G$ where $G^+_\alpha=H$.

Since every minimum triangulation is also a minimal triangulation, the \textsc{Minimum Fill-In} problem over the graph $G$ can be answered using an elimination ordering $\alpha$ which constructs a minimum triangulation $G^+_\alpha$.
Therefore, if $\alpha$ is a \textit{minimum elimination ordering}, then $\sum_{v\in V}|\textit{madj}^+_\alpha(v)|=\textit{mfi}(G)+|E|$.
Similarly, there must exist an elimination ordering $\beta$ such that $\max_{v\in V}|\textit{madj}^+_\beta(v)|=
\textit{tw}(G^+_\beta)=\textit{tw}(G)$ solving the \textsc{Treewidth} problem.
Furthermore, if there exists a minimum elimination ordering $\gamma$ over the graph $G$ such that $\max_{v\in V}|\textit{madj}^+_\gamma(v)|=\textit{tw}(G)$ we have that $\tau(G)=0$.

{
{
The following lemma demonstrates an easy-to-prove property of elimination ordering:

\begin{lemma}\label{cor:madjNonAdjacent}
Let $G=(V,E)$ be a graph, $\alpha$ an elimination ordering and 
let $v\in V$ be a vertex eliminated at step $i\in\{1,\dots,|V|\}$. Then  $N_{G_{i-1}}(v)=N_G(v)$ if $\alpha(j)\notin N_G(v)$ for every  step $j$, $j<i$.
\end{lemma}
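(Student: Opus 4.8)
The plan is to prove the slightly stronger statement by induction on the step index: for every $j\in\{0,1,\dots,i-1\}$ the neighbourhood of $v$ is unchanged, i.e.\ $N_{G_j}(v)=N_G(v)$. Taking $j=i-1$ then gives the lemma directly. The base case $j=0$ is immediate, since $G_0=G$ by \cref{def:elimGalpha}.

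For the inductive step I would assume $N_{G_{j-1}}(v)=N_G(v)$ for some $j$ with $1\le j\le i-1$ and examine how $G_j=(G_{j-1}\oplus F_j)-\alpha(j)$ differs from $G_{j-1}$ in the vicinity of $v$. By construction the edges are altered in only two ways: the fill edges $F_j=\textit{fill}_{G_{j-1}}(N_{G_{j-1}}(\alpha(j)))$ are added, and then $\alpha(j)$ is deleted. The key observation is that combining the hypothesis $\alpha(j)\notin N_G(v)$ with the inductive assumption $N_{G_{j-1}}(v)=N_G(v)$ yields $\alpha(j)\notin N_{G_{j-1}}(v)$, equivalently $v\notin N_{G_{j-1}}(\alpha(j))$. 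This single fact rules out both kinds of change.

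Concretely, since $v\notin N_{G_{j-1}}(\alpha(j))$, every edge of $F_j$ joins two vertices of $N_{G_{j-1}}(\alpha(j))$ and so none is incident to $v$; hence adding $F_j$ does not alter $N(v)$. Moreover $F_j$ inserts no edge incident to $\alpha(j)$ itself, so $\alpha(j)$ remains non-adjacent to $v$ after the fill; deleting $\alpha(j)$ therefore removes no edge incident to $v$, and $v\ne\alpha(j)$ because $j<i$ and $\alpha$ is a bijection. Thus $N_{G_j}(v)=N_{G_{j-1}}(v)=N_G(v)$, closing the induction.

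The argument is essentially mechanical, and the only point needing care---the place I would expect a reader to stumble---is that the hypothesis speaks of the \emph{original} neighbourhood $N_G(v)$, whereas the modifications at step $j$ depend on the \emph{current} neighbourhood $N_{G_{j-1}}(v)$. Strengthening the claim to ``$N_{G_j}(v)=N_G(v)$ for all $j<i$'' rather than asserting it only for $j=i-1$ is exactly what bridges this gap, since it lets us promote ``$\alpha(j)$ is not an original neighbour of $v$'' to ``$\alpha(j)$ is not a current neighbour of $v$'' at each stage.
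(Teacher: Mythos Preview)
Your argument is correct. The paper does not actually supply a proof of this lemma; it merely introduces it as ``an easy-to-prove property of elimination ordering'' and states it without justification. Your induction on the step index, strengthening the claim to $N_{G_j}(v)=N_G(v)$ for all $j\le i-1$, is exactly the natural way to fill in the omitted details, and your remark about why the strengthening is needed (to translate the hypothesis about $N_G(v)$ into information about $N_{G_{j-1}}(v)$) is well observed.
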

} 

{
Next, we present a lemma discussing the vertex connectivity of graphs in regard to the operation of vertex elimination:

\begin{lemma}\label{cor:tw=kConnected}
Let $G=(V,E)$ be a connected graph with the vertex connectivity of $k$ and $\alpha$ any elimination ordering over $G$. 
Then the following statements are true (where $G_0=G$ and $G_i$ is the graph obtained at the step $i$ of elimination ordering $\alpha$ as defined in \cref{def:elimGalpha}):
\begin{itemize}
    \item[(i)] For every step $i\in\{1,\dots, |V|-k-1\}$ of the elimination ordering $\alpha$, $|\textit{madj}^+_\alpha(\alpha(i))|\geq \kappa(G_i)\geq k$.
    
    \item[(ii)] $G_{|V|-k-1}$ is a complete graph, hence $\sum_{i\in\{|V|-k,\dots, |V|\}}
    |\textit{madj}^+_\alpha(\alpha(i))|={\frac{k(k+1)}{2}}$.
\end{itemize}

Consequently $|E|+\textit{mfi}(G)\geq k(|V|-k)+\frac{k(k-1)}{2}$.

\end{lemma}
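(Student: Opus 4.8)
The plan is to show that vertex elimination cannot drop the connectivity below $k$ until the graph has shrunk to exactly $k+1$ vertices, and then to read off both itemised claims and the final bound from this invariant together with the identity $|E|+\textit{mfi}(G)=\min_\alpha\sum_{i}|\textit{madj}^+_\alpha(\alpha(i))|$. This identity holds because $\sum_i|\textit{madj}^+_\alpha(\alpha(i))|=|E(G^+_\alpha)|\ge|E|+\textit{mfi}(G)$ for every $\alpha$ (each edge of the triangulation is charged once, to its lower endpoint), with equality for a minimum elimination ordering.

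The technical heart, which I expect to be the main obstacle, is the following invariant: if $H$ has $\kappa(H)\ge k$ and $|V(H)|\ge k+2$, then eliminating any single vertex $v$ produces a graph $H'$ with $\kappa(H')\ge k$. I would argue as follows. Eliminating $v$ first adds fill edges, which cannot lower connectivity, so the intermediate graph $H^+$ still satisfies $\kappa(H^+)\ge k$; it then deletes the now-simplicial vertex $v$, whose neighbourhood $C=N_H(v)$ is a clique in $H^+$ of size $\deg_{H^+}(v)\ge\delta(H^+)\ge\kappa(H^+)\ge k$. Suppose for contradiction that $H'=H^+-v$ had a separator $S$ with $|S|\le k-1$. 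Since $H'$ has at least $k+1$ vertices it is either complete (whence $\kappa(H')\ge k$, a contradiction) or $S$ genuinely disconnects $H'-S$. As $|C|\ge k>|S|$, some neighbour of $v$ lies outside $S$, and because $C$ is a clique all neighbours of $v$ outside $S$ lie in one component of $H'-S$. Re-inserting $v$ therefore attaches it to that single component, so $S$ would already separate $H^+$, contradicting $\kappa(H^+)\ge k$. Hence $\kappa(H')\ge k$.

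With this invariant the rest is bookkeeping. Starting from $\kappa(G_0)=\kappa(G)=k$ and applying the invariant at each step, which is valid while $|V(G_{i-1})|=|V|-i+1\ge k+2$, i.e.\ for $i\le|V|-k-1$, gives $\kappa(G_i)\ge k$ for every $i\in\{0,\dots,|V|-k-1\}$. This yields part (i), since $|\textit{madj}^+_\alpha(\alpha(i))|=\deg_{G_{i-1}}(\alpha(i))\ge\delta(G_{i-1})\ge\kappa(G_{i-1})\ge k$ gives the required degree lower bound alongside $\kappa(G_i)\ge k$. For part (ii), the graph $G_{|V|-k-1}$ has exactly $k+1$ vertices and connectivity at least $k$; since a $(k+1)$-vertex graph attains connectivity $k$ only when it is $K_{k+1}$, it is complete. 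Eliminating its $k+1$ vertices in turn produces the degrees $k,k-1,\dots,1,0$, whose sum is $\frac{k(k+1)}{2}$, which is exactly $\sum_{i=|V|-k}^{|V|}|\textit{madj}^+_\alpha(\alpha(i))|$.

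Finally I would assemble the bound. For an arbitrary $\alpha$, split the total degree sum at step $|V|-k-1$: the first $|V|-k-1$ terms are each at least $k$ by part (i), contributing at least $k(|V|-k-1)$, while the remaining $k+1$ terms contribute exactly $\frac{k(k+1)}{2}$ by part (ii). Hence $\sum_i|\textit{madj}^+_\alpha(\alpha(i))|\ge k(|V|-k-1)+\frac{k(k+1)}{2}=k(|V|-k)+\frac{k(k-1)}{2}$. Taking $\alpha$ to be a minimum elimination ordering, the left-hand side equals $|E|+\textit{mfi}(G)$, which gives the claimed inequality. The only delicate bookkeeping beyond the connectivity invariant is keeping the index ranges and the vertex-count thresholds ($\ge k+2$ before a step, exactly $k+1$ afterwards) consistent; everything else is routine.
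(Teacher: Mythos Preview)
Your proposal is correct and follows essentially the same approach as the paper: both show that elimination preserves $k$-connectivity because the eliminated vertex's neighbourhood becomes a clique (the paper argues by path-shortcutting through the clique edge, you by the dual separator-lifting argument), and then read off (i), (ii), and the final bound from this invariant. The paper actually proves the marginally stronger monotonicity $\kappa(G_i)\ge\kappa(G_{i-1})$, whereas you only maintain the baseline $\kappa(G_i)\ge k$, but either suffices for the lemma, and your derivation of the concluding inequality is more explicit than the paper's one-line ``Consequently''.
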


\begin{proof}
In order to prove the statement (i), it suffices to show that $\kappa(G_i)\geq \kappa(G_{i-1})$ for any step $i\in\{1,\dots, |V|-k-1\}$.
Since $G$ is connected, we have $k=\kappa(G_0)\geq 1$.

Suppose for a contradiction that there exists a step $i\in\{1,\dots, |V|-k-1\}$ in which $\kappa(G_i)<\kappa(G_{i-1})$.
Then we can fix a subset $S\subset V_{G_i}$ where $\kappa(G_i)\leq |S|< \kappa(G_{i-1})$ and $G_i-S$ is not connected.
Consequently, we can fix two vertices $u_1,u_2\in V_{G_i}$ such that there are no $u_1,u_2$-paths in $G_i-S$.
Since $|S|< \kappa(G_{i-1})$, we can let $P$ be the shortest $u_1,u_2$-path in $G_{i-1}-S$.
Clearly, $P$ must contain the vertex $v$ as an internal vertex, otherwise $P$ is a $u_1,u_2$-path in $G_i-S$.
Then $(y, v, w)$ is a subpath of $P$ for two vertices $y,w\in N_{G_{i-1}}(v)$ (where $u_1, u_2$ are not necessarily different from $y, w$).

Since $N_{G_{i-1}}(v)$ becomes a clique in $G_i$ after elimination of vertex $\alpha(i)$, we can define $P'$ as the path obtained from $P$ by replacing the subpath $(y, v, w)$ with the edge $yw$. 
Then $P'$ is by definition a $u_1,u_2$-path in $G_i-S$ which results in a contradiction proving statement (i).
Combining this and the fact that $G_{|V|-k-1}$ is a graph with $k+1$ vertices and connectivity of $k$, proves statement (ii).
\end{proof}
} 

} 
}
\section{Edges That Can Be Safely Added}\label{sec:safe}

{
When constructing a minimum triangulation of a graph in \cite{minDegree}, Bodlaender et al.\, provided a criterion for a set of missing edges to be safely added.
In this section, we extend their results by providing another characteristic of such edges.
Our result will be used in \cref{lem:safeVertexRepeatedly} to define a reduction rule for the \textsc{Minimum Fill-In} problem.
This theorem outlines a set of vertices of the input graph that can be safely eliminated at the start.
}

{
\begin{definition}\label{def:safeEdges}
Given a graph $G=(V,E)$, a set of missing edges $F'\subseteq \textit{fill}(V)$ is safe to add in $G$ if there exists a minimum triangulation $H=(V,E\cup F)$ of $G$ where $F'\subseteq F$.
\end{definition}

The following lemmas are easy to prove properties of edges that are safe to add (formal proofs can be found in \cite{myPhd}):

\begin{lemma}\label{lem:safeEdgesFirst}
Let $G$ be a graph, and $F'$ a set of safe edges to add in $G$.
Then every minimum triangulation $H'$ of $G\oplus F'$ is also a minimum triangulation of~$G$.
\end{lemma}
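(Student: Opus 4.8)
The plan is to prove the statement by relating the minimum fill-in of $G$ and $G \oplus F'$ through the definition of safe edges. Since $F'$ is safe to add in $G$, by \cref{def:safeEdges} there exists a minimum triangulation $H = (V, E \cup F)$ of $G$ with $F' \subseteq F$. The key quantitative relationship I would establish first is that $\textit{mfi}(G \oplus F') = \textit{mfi}(G) - |F'|$. The inequality $\textit{mfi}(G \oplus F') \leq \textit{mfi}(G) - |F'|$ follows immediately: since $F' \subseteq F$, the same triangulation $H$ can be viewed as a triangulation of $G \oplus F' = (V, E \cup F')$ obtained by adding the fill edges $F \setminus F'$, so $\textit{mfi}(G \oplus F') \leq |F \setminus F'| = |F| - |F'| = \textit{mfi}(G) - |F'|$.

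For the reverse inequality, I would take any minimum triangulation of $G \oplus F'$, say obtained by adding a set $F''$ of fill edges with $|F''| = \textit{mfi}(G \oplus F')$. Since $G \oplus F'$ already contains all edges of $G$ together with $F'$, the graph $(V, E \cup F' \cup F'')$ is a chordal supergraph of $G$, hence a triangulation of $G$ using the fill edges $F' \cup F''$ (noting $F'$ and $F''$ are disjoint from $E$, and $F''$ is disjoint from $F'$ since $F''$ are edges missing in $G \oplus F'$). Therefore $\textit{mfi}(G) \leq |F'| + |F''| = |F'| + \textit{mfi}(G \oplus F')$, which rearranges to $\textit{mfi}(G \oplus F') \geq \textit{mfi}(G) - |F'|$. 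Combining the two directions yields the equality $\textit{mfi}(G \oplus F') = \textit{mfi}(G) - |F'|$.

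With this equality in hand, the main claim follows cleanly. Let $H'$ be any minimum triangulation of $G \oplus F'$, written as $H' = (V, E \cup F' \cup F_{H'})$ where $|F_{H'}| = \textit{mfi}(G \oplus F')$. As argued above, $H'$ is a chordal supergraph of $G$ obtained by adding $F' \cup F_{H'}$, so it is a triangulation of $G$ with $|F' \cup F_{H'}| = |F'| + \textit{mfi}(G \oplus F') = \textit{mfi}(G)$ fill edges. Since this matches the minimum fill-in of $G$, the triangulation $H'$ is in fact a \emph{minimum} triangulation of $G$, as required.

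I expect the main subtlety to lie in bookkeeping the edge sets carefully so that all unions are genuinely disjoint from $E$ and from one another; the counting argument only works because a triangulation of $G \oplus F'$ adds edges that are missing in $G \oplus F'$, hence not in $F'$ and not in $E$. Once disjointness is handled correctly, both inequalities are routine, and the equality does all the work. No hard structural or chordality argument is needed beyond the elementary fact that a chordal supergraph of $G \oplus F'$ is also a chordal supergraph of $G$.
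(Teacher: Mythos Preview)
Your argument is correct; the equality $\textit{mfi}(G\oplus F')=\textit{mfi}(G)-|F'|$ via the two inequalities, together with the observation that any triangulation of $G\oplus F'$ is a chordal supergraph of $G$, is exactly the natural route. The paper itself does not spell out a proof (it calls the lemma ``easy to prove'' and defers to \cite{myPhd}), so there is nothing to compare against beyond noting that your write-up is the standard one.
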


\begin{lemma}\label{lem:safeRepeatedly}
Let $F'$ be a set of safe edges to add in $G$ and $F''$ be a set of safe edges to add in $G\oplus F'$. 
Then the edge set $F'\cup F''$ is safe to add in $G$.
\end{lemma}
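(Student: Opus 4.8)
The plan is to chain the two triangulations together, using \cref{lem:safeEdgesFirst} as the bridge that transports a minimum triangulation of $G\oplus F'$ back to a minimum triangulation of $G$. The whole argument is essentially a bookkeeping exercise in set containment once the right triangulation is identified.

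First I would unfold the hypotheses via \cref{def:safeEdges}. Since $F''$ is safe to add in $G\oplus F'$, there is a minimum triangulation $H=(V,(E\cup F')\cup F_1)$ of $G\oplus F'$ with $F''\subseteq F_1$. The crucial step is then to invoke \cref{lem:safeEdgesFirst}: because $F'$ is safe to add in $G$, every minimum triangulation of $G\oplus F'$ is also a minimum triangulation of $G$; in particular, $H$ is itself a minimum triangulation of $G$. This is where the strength of \cref{lem:safeEdgesFirst} does the real work, letting me reuse the same graph $H$ but now measured against the base graph $G$ rather than $G\oplus F'$.

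Finally I would verify the containment. Viewed as a triangulation of $G$, the fill set of $H$ is $(E\cup F'\cup F_1)\setminus E=F'\cup F_1$; here I would note that $F'$ consists of missing edges of $G$ (so $F'\cap E=\emptyset$) and $F_1$ consists of missing edges of $G\oplus F'$ (so $F_1$ is disjoint from both $E$ and $F'$), which is exactly what makes the fill set equal $F'\cup F_1$ with no overlap. Since $F''\subseteq F_1$, we get $F'\cup F''\subseteq F'\cup F_1$, so $H$ is a minimum triangulation of $G$ whose fill set contains $F'\cup F''$. By \cref{def:safeEdges}, this certifies that $F'\cup F''$ is safe to add in $G$.

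I do not expect a serious obstacle here; the only point demanding care is keeping track of which edge set plays the role of the fill set relative to which base graph, and confirming the disjointness conditions so that the fill set of $H$ over $G$ is precisely $F'\cup F_1$. The conceptual content is entirely carried by \cref{lem:safeEdgesFirst}, without which one could not conclude that $H$ remains a \emph{minimum} triangulation after changing the reference graph.
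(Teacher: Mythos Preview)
Your argument is correct and is precisely the natural proof: use \cref{def:safeEdges} to get a minimum triangulation $H$ of $G\oplus F'$ containing $F''$, apply \cref{lem:safeEdgesFirst} to see that $H$ is also a minimum triangulation of $G$, and then observe that its fill set over $G$ is $F'\cup F_1\supseteq F'\cup F''$. The paper does not spell out a proof here (it defers to \cite{myPhd}), but what you wrote is exactly the straightforward route the authors have in mind, including the careful disjointness check on $F'$, $F_1$, and $E$.
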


Combining \cref{lem:safeEdgesFirst} and \cref{lem:safeRepeatedly}, we can conclude that a minimum triangulation of a graph can be constructed by a repeated addition of a set of safe edges and subsequently constructing a minimum triangulation of the resulting graph.
}

{
The following theorem by Bodlaender et al. (\cite{minDegree}) 
presents a condition for an edge set to be safe to add.
Notice that the sets of edges it covers must have only one missing edge.

\begin{theorem}[\cite{minDegree}]\label{lem:bodlaenderSafeEdges}
Let $S$ be a minimal separator of the graph $G=(V,E)$ where $|\textit{fill}(S)|=1$ and $S\subseteq N(v)$ for a vertex $v\in V\setminus S$.
Then $\textit{fill}(S)$ is safe to add in $G$.
\end{theorem}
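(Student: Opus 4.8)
The plan is to reduce the statement to an equality about fill-in numbers and then to prove it by an exchange argument. Write $\textit{fill}(S)=\{xy\}$ and set $G'=G\oplus xy$. By \cref{def:safeEdges}, $\{xy\}$ is safe precisely when some minimum triangulation of $G$ contains $xy$, and a short computation shows this is equivalent to $\textit{mfi}(G')=\textit{mfi}(G)-1$: since every triangulation of $G'$ is also a triangulation of $G$ with $xy$ as one extra fill edge, we always have $\textit{mfi}(G')\ge\textit{mfi}(G)-1$, so the content to prove is the reverse inequality $\textit{mfi}(G')\le\textit{mfi}(G)-1$. First I would record the structural consequences of the hypotheses: because $v\in V\setminus S$ is adjacent to all of $S$, the component $C_v$ of $G-S$ containing $v$ satisfies $N(C_v)=S$, i.e.\ it is a full component; the minimality of the separator $S$ then guarantees a second full component $D\neq C_v$; and in $G'$ the set $S$ is a clique minimal separator, so \cref{lem:cliqueMinimalSeparator} applies to $G'$ and gives $\textit{mfi}(G')=\sum_{C\in\mathcal C(S)}\textit{mfi}(G'[S\cup V_C])$.

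The core step is to show that any minimum triangulation $H$ of $G$ can be converted into a triangulation $H'$ of $G$ with $xy\in E(H')$ and no more fill edges; together with the reduction above this yields $\textit{mfi}(G')\le\textit{mfi}(G)-1$. If $xy\in E(H)$ we take $H'=H$. Otherwise $S\cup\{v\}$ induces in $H$ a complete graph minus the single edge $xy$, so $x$ and $y$ lie in distinct maximal cliques of $H$ through $v$. Taking a minimal $x,y$-separator $T$ of the chordal graph $H$ (necessarily a clique), the path $x,v,y$ forces $v\in T$. Since $D$ is a full component it has a neighbour of $x$ and a neighbour of $y$; as $x$ and $y$ lie on opposite sides of $T$ in $H$ and $D$ is connected, $D$ must meet $T$. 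Hence there is a vertex $d\in D\cap T$, and because $T$ is a clique containing $v$ we get $vd\in E(H)$; as $v$ and $d$ lie in distinct components of $G-S$, the edge $vd$ is a fill edge of $H$ that crosses the separator $S$.

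Finally I would carry out the exchange: build $H'$ by completing $S$ with the edge $xy$ and re-triangulating each piece $G'[S\cup V_C]$ optimally, gluing the pieces along the now-common clique $S$ (a clique-sum of chordal graphs along a clique is chordal, and the result covers all edges of $G'$). This $H'$ contains $xy$, and the claim is that it uses at most as many fill edges as $H$: the one new edge $xy$ that completes $S$ is compensated by the crossing fill edge $vd$ that $H$ was forced to contain but $H'$ avoids, while the hypothesis $|\textit{fill}(S)|=1$ guarantees that completing $S$ costs exactly one edge. The main obstacle is precisely this accounting — proving that re-triangulating through the completed clique $S$ never costs more than $H$ spends off the separator, i.e.\ that the crossing fill edges forced in $H$ cannot be cheaper than completing $S$ by $xy$. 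This is exactly where both hypotheses are indispensable: $S\subseteq N(v)$ produces the forced crossing edge through the full component $D$, and $|\textit{fill}(S)|=1$ keeps the completion cost to a single edge. I expect to settle this comparison by induction on $|V|$ (equivalently, on the number of crossing fill edges of $H$), peeling off the clique separator $S$ at each stage via \cref{lem:cliqueMinimalSeparator}.
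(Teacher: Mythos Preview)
The paper does not prove \cref{lem:bodlaenderSafeEdges}; it is quoted from \cite{minDegree} and stated without proof. So there is no ``paper's argument'' to compare against, and your proposal must be judged on its own.

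Your reduction to $\textit{mfi}(G')\le\textit{mfi}(G)-1$ is correct, and the structural part is well done: the component $C_v$ is indeed full, a second full component $D$ exists, and your argument that any minimal $x,y$-separator $T$ of $H$ must contain $v$ and meet $D$ (hence yields a crossing fill edge $vd$) is clean and correct. The observation that $S\setminus\{x,y\}\subseteq T$ (each such vertex is adjacent to both $x$ and $y$) is also worth recording explicitly.

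The genuine gap is exactly where you flag it, but your proposed closure does not work. You want
\[
1+\sum_{C}\textit{mfi}(G'[S\cup V_C])\;\le\;\textit{mfi}(G),
\]
and you have in hand only that $H[S\cup V_C]$ is a triangulation of $G[S\cup V_C]$, \emph{not} of $G'[S\cup V_C]$. Since $xy\notin E(H)$, adding $xy$ to $H[S\cup V_C]$ can destroy chordality (a chordless $x,y$-path of length $\ge 3$ inside $S\cup V_C$ becomes a chordless cycle), so you have no bound $\textit{mfi}(G'[S\cup V_C])\le f_C$ from $H$. The single saved edge $vd$ compensates for the single added edge $xy$, but it does not cover a possible surplus appearing simultaneously in several pieces. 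Your induction ``peeling off $S$'' cannot be set up: in each $G'[S\cup V_C]$ the set $S$ is no longer a separator at all, so the hypotheses of the theorem are unavailable; and if instead you try to induct by adding $vd$ to $G$ and re-applying the theorem, $S$ may cease to be a \emph{minimal} separator (adding $vd$ merges $C_v$ and $D$, potentially leaving only one full component).

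What would close the gap is not a global re-triangulation but a local edge-swap on $H$ itself, followed by a direct chordality check. The proof of \cref{lem:degkAlmostSimplicial} in this very paper is the right template: there one removes all fill edges incident to the distinguished vertex, replaces them by edges from the ``missing'' endpoint, adds the missing edge(s), and verifies by a short case analysis on putative chordless cycles that the resulting graph is chordal with no more fill. The same scheme, specialised to a single missing edge $xy$ and the crossing edge $vd\in I_v$ you have already located, is what is needed here.
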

}

{
To the best of our knowledge, this remains the only known criterion for edges that are safe to add. 
In the following theorem, we prove a new condition for a set of edges to be safe to add.
We note that, an almost simplicial vertex has a degree of at least two.
Therefore, the following theorem is only applicable to graphs with connectivity of at least two.

\begin{theorem}\label{lem:degkAlmostSimplicial}
Let $G=(V,E)$ be a graph, and $v\in V$ be an almost simplicial vertex where $\textit{deg}(v)=\kappa(G)$.
Then $\textit{fill}(N_G(v))$ is safe to add.
\end{theorem}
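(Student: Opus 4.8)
The plan is to exhibit a minimum triangulation of $G$ that contains $\textit{fill}(N_G(v))$; by \cref{def:safeEdges} this is exactly what has to be shown. Writing $G' = G \oplus \textit{fill}(N_G(v))$, it is conceptually cleanest to aim at the single inequality $\textit{mfi}(G) \ge |\textit{fill}(N_G(v))| + \textit{mfi}(G')$: the reverse inequality is immediate since every triangulation of $G'$ is a triangulation of $G$, and equality together with \cref{lem:safeEdgesFirst} delivers a minimum triangulation of $G$ in which $N_G(v)$ is a clique. I would carry out the whole argument in the language of elimination orderings, where the goal becomes the construction of a minimum elimination ordering $\gamma$ over $G$ with $\gamma(1)=v$. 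Indeed, if $v$ is eliminated first then \cref{cor:madjNonAdjacent} gives $N_{G_0}(v)=N_G(v)$, so the first step turns $N_G(v)$ into a clique and $\textit{fill}(N_G(v)) \subseteq E(G^+_\gamma)$; since $G^+_\gamma$ is a minimum triangulation, the edge set is safe.

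First I would start from an arbitrary minimum elimination ordering $\alpha$ over $G$ and push $v$ to the front by a sequence of adjacent transpositions, controlling the fill at each step. The key local computation is that swapping two consecutive eliminated vertices alters neither the graph produced after those two steps nor any later step, so the produced triangulation changes only in the edges incident to the two swapped vertices. Concretely, if $v$ sits immediately after a vertex $u$ and the current graph at that stage is $G_{j-1}$, then the swap is fill-neutral when $uv \notin E(G_{j-1})$, and otherwise moving $v$ ahead of $u$ changes the total fill by $\deg_{G_{j-1}}(v) - \deg_{G_{j-1}}(u)$. Hence no transposition increases the fill provided that, at every stage, $\deg_{G_{j-1}}(v) \le \deg_{G_{j-1}}(u)$ whenever $u$ is a current neighbour of $v$.

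This is exactly where both hypotheses enter. Using almost-simpliciality, write $N_G(v) = \{x\} \cup Q$ with $Q$ a clique of size $\kappa(G)-1$ and $\textit{fill}(N_G(v))=\{xq : q \in Q,\ xq \notin E\}$. The condition $\deg_G(v)=\kappa(G)$ combined with \cref{cor:madjNonAdjacent} keeps $\deg(v)$ equal to its minimum value $\kappa(G)$ as long as no neighbour of $v$ has been eliminated earlier, while \cref{cor:tw=kConnected} guarantees (as long as the current graph is not yet complete) that every vertex $v$ is swapped past still has degree at least $\kappa(G)=\deg_{G_{j-1}}(v)$; together these make every dangerous transposition non-increasing. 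Once $N_G(v)$ has been made into a clique it is a clique minimal separator of the resulting graph, and \cref{lem:cliqueMinimalSeparator} detaches the component $\{v\}$, which both confirms the count and lets the remainder of the ordering be reused verbatim.

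The main obstacle is the bookkeeping of $\deg(v)$ during the march: the minimum ordering $\alpha$ may eliminate some neighbour of $v$ before $v$, inflating $\deg_{G_{j-1}}(v)$ above $\kappa(G)$ and threatening a strict increase in fill at a later swap. I expect to handle this by ordering the transpositions so that $v$ is advanced past its neighbours before any of them is processed, and by using the clique $Q$ to reroute the fill paths created through $v$, so that the only genuinely new fill edges are those of $\textit{fill}(N_G(v))$ itself, all incident to $x$. This rerouting is the technical heart of the proof and is precisely the step that breaks down without either the almost-simplicial hypothesis or the equality $\deg_G(v)=\kappa(G)$.
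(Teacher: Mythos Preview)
Your transposition framework is set up correctly: swapping consecutive vertices $u,v$ in an elimination ordering leaves $G_{j+1}$ unchanged, and when $uv\in E(G_{j-1})$ the total fill changes by exactly $\deg_{G_{j-1}}(v)-\deg_{G_{j-1}}(u)$. The gap is that this framework does not reach the conclusion, and the piece you defer as ``the technical heart'' is in fact the whole proof.

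Concretely, when you march $v$ from position $i$ down to position $1$, the graph governing the swap past position $j$ is the \emph{original} $G_{j-1}$, determined by $\alpha(1),\ldots,\alpha(j-1)$; you cannot ``order the transpositions so that $v$ is advanced past its neighbours before any of them is processed'', because with adjacent transpositions $v$ must traverse every intermediate position and those earlier eliminations have already happened in $G_{j-1}$. Once some $w\in N_G(v)$ has been eliminated at step $j_0<j$, the degree $\deg_{G_{j-1}}(v)$ jumps by $|N_{G_{j_0-1}}(w)\setminus N_G[v]|$, which is not bounded by anything in your hypotheses, while $\deg_{G_{j-1}}(\alpha(j))$ is only bounded below by $\kappa(G)$. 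So individual swaps after $j_0$ can strictly increase the fill, and nothing in the proposal shows that the net change over all swaps is nonpositive. The almost-simplicial structure is never actually used; the ``rerouting through $Q$'' you allude to is not a refinement of the transposition count but a separate combinatorial argument that you have not supplied.

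For comparison, the paper drops orderings and works directly on a minimum triangulation $H$ in which $N_G(v)$ is not yet a clique. Writing $I_v=\{vu:u\in N_H(v)\setminus N_G(v)\}$, it first proves (using $\kappa(H)\ge k$ to build short paths) that for every missing edge $w_1w_i$ of $N_G(v)$ in $H$, where $w_1$ is your vertex $x$, there is a \emph{distinct} $u_i$ with $vu_i\in I_v$ and $w_1u_i$ already an edge of $H$; in particular $|I_v|\ge|\textit{fill}_H(N_G(v))|$. It then deletes $I_v$, adds $\{w_1u:vu\in I_v\}$ and $\textit{fill}_H(N_G(v))$, checks via the previous count that no edges were gained, and verifies chordality by a case analysis on hypothetical chordless cycles. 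This is exactly the ``rerouting'' you anticipate, but carried out through $x=w_1$ rather than through $Q$, and on the triangulation rather than the ordering.
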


\begin{proof}
Let $k=\textit{deg}(v)$ and $N_G(v)=\{w_1,\dots,w_k\}$.
As noted above, the fact that $v$ is almost simplicial implies that $k\geq 2$ and we can assume that $N_G(v)\setminus\{w_1\}$ is a clique in $G$.

To prove the statement of the theorem, 
suppose for a contradiction that in every minimum triangulation $H=(V,E\cup F)$ of $G$, $N_G(v)$ is not a clique.
Fix one such minimum triangulation $H$ and observe that because $H$ is a supergraph of $G$ (obtained by only adding edges), we have $\kappa(H)\geq k=\kappa(G)$ and $N_G(v)\subseteq N_H(v)$.
Define the set of edges $I_v\subseteq F$ as  
$I_v=\{vu\; |\; u\in N_H(v)\setminus N_G(v)\}$.
In the first stage of the proof we establish that for every missing edge $w_1w_i\in \textit{fill}_H\,(N_G(v))$, for $i\in\{2,\dots,k\}$, there exists a unique vertex $u_i$ such that $vu_i\in I_v$ and $w_1u_i\in E\cup F$.

We prove this by induction on the size of
$\textit{fill}_H\,(N_G(v))$ starting with the base case: $|\textit{fill}_H\,(N_G(v))|=1$.
Assume that $\textit{fill}_H\,(N_G(v))=\{w_1w_2\}$.
As discussed before, $\kappa(H)\geq k$ and $|N_G[v]|=k+1$ hence the graph $H-(N_G[v]\setminus\{w_1,w_2\})$ remains connected.
Let $P_{w_1,w_2}=(w_1,u_2^1,\dots,u_2^l,w_2)$ for $l\geq 1$ be the shortest $w_1,w_2$-path in the graph $H-(N_G[v]\setminus\{w_1,w_2\})$, where obviously $u_2^j\notin N_G[v]$, for all $j\in \{1,\dots, l\}$.
Since $P_{w_1,w_2}$ is the shortest path between $w_1,w_2$ in $H-(N_G[v]\setminus\{w_1,w_2\})$ and $w_1w_2\notin E\cup F$, the cycle $C_{w_1,w_2}=(v,w_1,u_2^1,\dots,u_2^l,w_2,v)$ in $H$ does not have any chords with both vertices of the vertex set $\{w_1,u_2^1,\dots,u_2^l,w_2\}$.
Then we necessarily have that $vu_2^j\in F$ for every $j\in \{1,\dots, l\}$ otherwise $C_{w_1,w_2}$ is a chordless cycle in $H$ contradicting the assumption that $H$ is a triangulation of $G$.
Then in this case, select $u_2 = u_2^1\in N_H(v)\setminus N_G(v)$ be the unique vertex corresponding to the missing edge $w_1w_2\in \textit{fill}_H\,(N_G(v))$.
By selection, we have $vu_2\in I_v$ and $w_1u_2\in E\cup F$ as required.

Next, suppose that $|\textit{fill}_H\,(N_G(v))|\geq 2$.
Since $|\textit{fill}_G\,(N_G(v))|\geq
|\textit{fill}_H\,(N_G(v))|$ and $|\textit{fill}_G\,(N_G(v))|\leq k-1$, we have $|\textit{fill}_H\,(N_G(v))|\leq k-1$.
Assume that our claim holds for any $c=|\textit{fill}_H\,(N_G(v))|$, where $1\leq c\leq k-2$, and now we extend it to the case where $|\textit{fill}_H\,(N_G(v))|=c+1$.
Let $\textit{fill}_H\,(N_G(v))=\{w_1w_2,\dots,w_1w_{c+2}\}$.
Therefore, we are supposing that for every missing edge in $\{w_1w_2,\dots,w_1w_{c+1}\}\subset \textit{fill}_H\,(N_G(v))$ there exists a unique vertex in $\{u_2,\dots,u_{c+1}\}$ such that $\{vu_2,\dots,vu_{c+1}\}\subset I_v$ and $\{w_1u_2,\dots,w_1u_{c+1}\}\subset E\cup F$.

Now we need to prove that there exists a unique vertex $u_{c+2}\in N_H(v)\setminus N_G(v)$ associated with the missing edge $w_1w_{c+2}\in \textit{fill}_H\,(N_G(v))$ where $w_1u_{c+2}\in E\cup F$.
Since $\kappa(G)\geq k$, the graph $H-\{u_2,\dots,u_{c+1},v,w_{c+3},\dots,w_{k}\}$ remains connected.
Note that in the case where $c=k-2$, we refer to the graph $H-\{u_2,\dots,u_{c+1},v\}$.
Let $P_{w_1,w_{c+2}}$ be the shortest $w_1,w_{c+2}$-path in this graph.
By selection and the fact that $\{w_2,\dots,w_{c+2}\}$ is a clique in $H$, $P_{w_1,w_{c+2}}$ can contain at most one internal vertex from $\{w_2,\dots,w_{c+1}\}$.
Thus, $P_{w_1,w_{c+2}}$ is either $(w_1,u_{c+2}^1,\dots,u_{c+2}^l,w_{c+2})$ or $(w_1,u_{c+2}^1,\dots,u_{c+2}^l,w_i,w_{c+2})$ where $w_i\in\{w_2,\dots,w_{c+1}\}$ and $l\geq 1$.
Also following the selection of $P_{w_1,w_{c+2}}$, for every $j\in\{1,\dots,l\}$ we have $u_{c+2}^j\notin N_G[v]$.
Define $C_{w_1,w_{c+2}}$ to be the cycle in $H$ composed of the paths $P_{w_1,w_{c+2}}$ and $(w_1,v,w_{c+2})$.
Since $P_{w_1,w_{c+2}}$ is the shortest path between $w_1,w_{c+2}$ in 
$H-\{u_2,\dots,u_{c+2},v,w_{c+3},\dots,w_k\}$ and $w_1w_{c+2}\notin E\cup F$, the cycle $C_{w_1,w_{c+2}}$ does not have chords with both endpoints in the path $P_{w_1,w_{c+2}}$.
Therefore, we must have $vu_{c+2}^j\in I_v$ for every $j\in \{1,\dots,l\}$ otherwise $C_{w_1,w_{c+2}}$ is a chordless cycle in $H$, a contradiction.
By selection of the path $P_{w_1,w_{c+2}}$ we have $u_{c+2}^1\notin \{u_2,\dots,u_{c+1}\}$, and $w_1u_{c+2}^1\in E\cup F$ therefore, we let $u_{c+2}=u_{c+2}^1$ be the unique vertex in $N_H(v)\setminus N_G(v)$ associated with the missing edge $w_1w_{c+2}\in\textit{fill}_{H}\,(N_G(v))$.
This proves our claim that for every missing edge $w_1w_i\in \textit{fill}_H\,(N_G(v))$, $i\in\{2,\dots,k\}$, there exists a unique vertex $u_i$ such that $vu_i\in I_v$ and $w_1u_i\in E\cup F$.
This will be needed shortly.

For the next stage of the proof, we define the set of chords
$F'=(F\setminus I_v)\cup \{w_1u|vu\in I_v\}\cup \textit{fill}_H\,(N_G(v))$.
We know from the first part of the proof that $|I_v|\geq |\textit{fill}_H\,(N_G(v))|$.
By definition, every missing edge $w_1w_i\in\textit{fill}_H\,(N_G(v))$ added in $F'$ replaces an edge $vu\in I_v$ that is removed from $F$ and the corresponding edge $w_1u$ already exists in $E\cup F$ (hence $w_1u\notin (F'\setminus F)$).
We then replace all the remaining edges $vu'\in I_v$ with an edge $w_1u'$ in $F'$ (unless $w_1u'$ already exists).
As a result, we have that $|F'|\leq |F|$.

Next, define the supergraph $H'=(V,E\cup F')$.
Note that by the construction of $F'$, $N_{H'}(v)=N_G(v)$ and $N_{H'}(v)$ is a clique in $H'$.
Now, let us prove that $H'$ is chordal.
Suppose for a contradiction that there exists a chordless cycle in $H'$.
Let $C'$ be the shortest chordless cycle in $H'$.
Since $N_G(v)=\{w_1,\dots,w_k\}$ is a clique minimal separator in $H'$, separating $v$ from the rest of the graph, we know that $C'$ must be entirely contained in the graph $H'-v$ and can contain at most two vertices from $\{w_1,\dots,w_k\}$.
In the following, we will consider all possibilities and show that every case proves the existence of a chordless cycle in the triangulation $H$, a contradiction.

\renewcommand*\thecases{\Alph{cases}}
\begin{mycases}
    \case 
    {
    Suppose that $w_1$ is not a vertex of the cycle $C'$.
    Hence $C'$ cannot contain any edges from 
    $\textit{fill}_H\,(N_G(v))$ that are added in $H'$.
    This implies that $C'$ is a cycle in $H$ and since it does not contain the vertex $v$, none of the removed edges $I_v$ are chords in $C'$.
    This concludes that $C'$ is a chordless cycle in $H$.
    }
    
    \case
    {
    Suppose that $C'$ only contains the vertex $w_1$ from $\{w_1,\dots,w_k\}$. 
    Let $C'=(w_1,y_1,\dots,y_l,w_1)$ where $l\geq 3$.
    As $C'$ is a chordless cycle, it can contain at most two edges from the set of edges $\{w_1u|vu\in I_v\}$ that are added in $H'$.
    Clearly, these edges may only be $w_1y_1$ or $w_1y_l$.
    Thus, we need to consider the following complementary subcases and prove that each leads to the existence of a chordless cycle in $H$:
    }
    
    \begin{subcases}
        \subcase 
        {Suppose that none of the edges $w_1y_1,w_1y_l$ are added in $H'$.
        Consequently, $C'$ is a cycle in $H$ and because it does not contain the vertex $v$, the removeded $I_v$ are not chords in $C'$.
        As a result, $C'$ is a chordless cycle in $H$.
        }
        
        \subcase 
        {Suppose that $w_1y_1$ is the only edge between $w_1y_1,w_1y_l$ that is added in $H'$ (the other case is symmetrical).
        By construction, $w_1y_1$ in $H'$ replaces $vy_1$ in $H$.
        Consider the cycle $C=(v,y_1,\dots,y_l,w_1,v)$ in $H$.
        Since $C'$ is chordless in $H'$ and only the edges from $I_v$ are removed from $H$, the cycle $C$ can not have any chords between the vertices $y_1,\dots,y_l,w_1$ in $H$.
        Furthermore, for every $i\in \{2,\dots,l\}$, $vy_i\notin I_v$ otherwise, by our construction $C'$ would have the chord $w_1y_i$ in $H'$, contradicting the assumption that $C'$ is chordless.
        This concludes that $C$ is a chordless cycle in $H$, as desired.}
        
        \subcase 
        {Suppose that both edges $w_1y_1,w_1y_l$ are added in $H'$.
        By construction, $w_1y_1,w_1y_l$ replace the edges $vy_1,vy_l\in I_v$ in $H$.
        Then define the cycle $C=(v,y_1,\dots,y_l,v)$ in $H$.
        Similar to the previous subcase, the cycle $C$ cannot have chords between the vertices $y_1,\dots,y_l$ in $H$.
        Additionally, for every $i\in \{2,\dots,l-1\}$ we have $vy_i\notin I_v$ otherwise $w_1y_i$ is a chord in $C'$.
        This proves that $C$ is a chordless cycle in $H$.}
    \end{subcases}
    
    \case
    {
    Suppose that $C'$ contains the vertex $w_1$ and a vertex $w_i\in \{w_2,\dots,w_k\}$.
    Since $\{w_1,\dots,w_k\}$ is a clique in $H'$, $w_1w_i$ must be an edge in the cycle $C'$.
    Let $C'=(w_1,w_i,y_1,\dots,y_l,w_1)$ where $l\geq 2$.
    
    If $w_1w_i$ was already an edge in $H$, then $C'$ is a cycle in $H$.
    Additionally, $C'$ would be a chordless cycle in $H$ as none of the removed edges $I_v$ would belong to $C'$.
    Therefore we can further assume that $w_1w_i\in \textit{fill}_H\,(N_G(v))$.
    
    $w_1y_l$ can potentially be one of the edges that are added in $H'$ so let us consider two complementary cases and define the cycle $C$ in $H$ accordingly: If $vy_l\in I_v$ we define $C=(v,w_i,y_1,\dots,y_l,v)$ otherwise if $vy_l\notin I_v$ let $C=(w_1,v,w_i,y_1,\dots,y_l,w_1)$.
    In either case, $C$ cannot have $vy_j$ as a chord in $H$ for any $j\in\{1,\dots,l-1\}$ since this would imply that the cycle $C'$ has the chord $w_1y_j$ in $H'$.
    Furthermore, going by the assumption that $C'$ is a chordless cycle in $H'$,  $C$ cannot have any chords between vertices in $\{w_1,w_i,y_1,\dots,y_l\}$.
    This concludes that $C$ is a chordless cycle in $H$.
    }
\end{mycases}

This proves that $H'$ is a chordal supergraph of $G$.
Combined with the fact that $|F'|\leq |F|$ we find that $H'$ is also a minimum triangulation of $G$ and by construction $\textit{fill}_G(N_G(v))\subseteq F'$.
Consequently, by \cref{def:safeEdges} we can state that $\textit{fill}_G(N_G(v))$ is safe to add in $G$.
\end{proof}

By construction of the minimum triangulation $H'=(V,E\cup F')$ (in the proof of \cref{lem:degkAlmostSimplicial}), $N_{H'}(v)=N_G(v)$ in addition to $\textit{fill}_G(N_G(v))\subseteq F'$.
In other words, no edges are added in $H'$ with an endpoint in~$v$.
}

{
In what follows, given a graph $G=(V,E)$ and a vertex $v\in V$, we use 
Algorithm~\ref{alg:Fv} to
define a set of  edges $F^v$, $F^v\subseteq \textit{fill}_G\,(N_G(v))$, used later  in~\cref{lem:safeVertexRepeatedly} in a reduction rule for the \textsc{Minimum Fill-In} problem.

\begin{algorithm}[H]
\SetAlgoLined
\KwIn{A graph $G=(V,E)$ and a vertex $v\in V$}
\KwOut{The set of edges $F^v\subseteq \textit{fill}\,(N(v))$}
    Initially let $F^v=\emptyset$\;
    \While{there exists a minimal separator $S$ in the graph $G\oplus F^v$ s.t.\ $S\subseteq N_G(v)$ and $|\textit{fill}_{G\oplus F^v}(S)|=1$}
    {
    Fix one such minimal separator $S\in \mathcal{S}_{G\oplus F^v}$\;
    Add $\textit{fill}_{G\oplus F^v}\,(S)$ to $F^v$\; 
    }
    \caption{Definition of the set $F^v$}
    \label{alg:Fv}
\end{algorithm}
\vspace{1em}

Notice that $F^v$ can be possibly empty and that the order in which the minimal separators are selected in Algorithm \ref{alg:Fv} is not important.
\hyperref[fig:Fv]{Figure 1} demonstrates the steps of Algorithm \ref{alg:Fv} with an example.

\begin{observation}\label{ob:1}
Combining \cref{lem:safeRepeatedly} and a repeated application of \cref{lem:bodlaenderSafeEdges}, it is easy to see that for any vertex $v$ the set $F^v$ is safe to add to $G$.
\end{observation}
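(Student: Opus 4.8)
The plan is to argue by induction on the number of iterations of the while loop in Algorithm~\ref{alg:Fv}. Since each iteration adds exactly one previously missing edge to $F^v$ and there are only finitely many potential fill edges in $\textit{fill}_G(N_G(v))$, the loop terminates after some finite number $m$ of iterations. I would set $F^v_0=\emptyset$ and, for $1\le j\le m$, let $F^v_j=F^v_{j-1}\cup \textit{fill}_{G\oplus F^v_{j-1}}(S_j)$ denote the value of the accumulator after the $j$-th iteration, where $S_j$ is the minimal separator selected in that iteration; thus $F^v=F^v_m$, and the inductive claim to carry is that each $F^v_j$ is safe to add in $G$.

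The base case $F^v_0=\emptyset$ is immediate from \cref{def:safeEdges}, since the empty set is contained in the fill of every minimum triangulation. For the inductive step, I would assume $F^v_{j-1}$ is safe to add in $G$ and then work in the graph $G\oplus F^v_{j-1}$. The key observation is that every edge placed in $F^v$ joins two vertices of $N_G(v)$, so $F^v_{j-1}$ adds no edge incident to $v$ and hence $N_{G\oplus F^v_{j-1}}(v)=N_G(v)$. By the loop guard, $S_j$ is a minimal separator of $G\oplus F^v_{j-1}$ with $S_j\subseteq N_G(v)=N_{G\oplus F^v_{j-1}}(v)$ and $|\textit{fill}_{G\oplus F^v_{j-1}}(S_j)|=1$; moreover $v\notin S_j$ since $v\notin N_G(v)$. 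These are precisely the hypotheses of \cref{lem:bodlaenderSafeEdges} applied to $G\oplus F^v_{j-1}$ (with $v$ as the vertex whose neighbourhood contains $S_j$), so $\textit{fill}_{G\oplus F^v_{j-1}}(S_j)$ is safe to add in $G\oplus F^v_{j-1}$. I would then invoke \cref{lem:safeRepeatedly} with $F'=F^v_{j-1}$ (safe in $G$) and $F''=\textit{fill}_{G\oplus F^v_{j-1}}(S_j)$ (safe in $G\oplus F'$) to conclude that $F^v_j=F'\cup F''$ is safe to add in $G$, completing the induction and hence showing $F^v=F^v_m$ is safe.

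I do not expect a genuine obstacle here; the argument is a clean telescoping of the two cited lemmas. The only points requiring care are bookkeeping rather than substance: verifying that the separator hypothesis is checked in the current graph $G\oplus F^v_{j-1}$ and not in $G$, confirming that $v$ still plays the role demanded by \cref{lem:bodlaenderSafeEdges} after edges have been inserted — which is exactly why it matters that $F^v$ only ever adds edges inside $N_G(v)$ and never touches $v$ itself — and noting termination so that the induction actually reaches $F^v$. This modest amount of verification is why the statement is naturally phrased as an observation rather than a full theorem.
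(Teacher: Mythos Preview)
Your proposal is correct and is precisely the argument the paper has in mind: the observation is stated without a formal proof, merely pointing to \cref{lem:safeRepeatedly} and repeated use of \cref{lem:bodlaenderSafeEdges}, and your induction on the iterations of Algorithm~\ref{alg:Fv} is exactly how one unpacks that sentence. The extra care you take (termination, $v\notin S_j$, and $N_{G\oplus F^v_{j-1}}(v)=N_G(v)$) is appropriate bookkeeping and does not deviate from the intended approach.
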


\usetikzlibrary{shapes, backgrounds}
\tikzset{
comp/.style={draw=black,ellipse,minimum width=10pt, inner sep=3pt, text width=4mm, text height=2 mm,  align=center, fill=white},
invisible/.style={draw,circle,fill=white,minimum size=0pt,inner sep=0pt},
norm/.style={draw,circle,fill=black,inner sep=0pt}
}
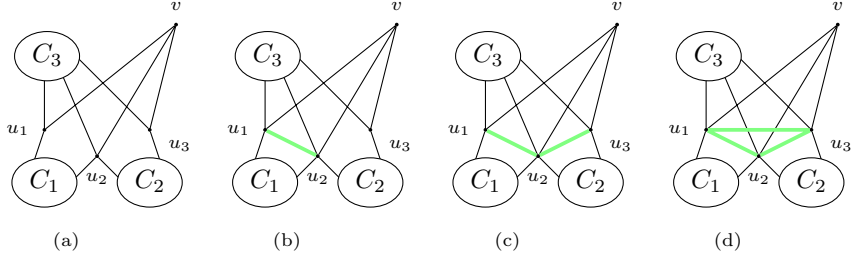
\begin{figure}
    \begin{subfigure}[b]{0.15\textwidth}
        \begin{tikzpicture}[scale=0.35, auto,swap]
            {
            \node [comp] (c1) at (-2,-4) {$C_1$};
            \node [comp] (c2) at (2,-4) {$C_2$};
            \node [comp] (c3) at (-1.90,0.80) {$C_3$};
            }; 
            \begin{scope}[on background layer]
                \draw 
                (3,2) node (v) [norm, label={[label distance=0.05cm]90:\scriptsize $v$}] {}
                (-2,-2) node (u1) [norm, label={[label distance=0.05cm]180:\scriptsize $u_1$}] {}
                (0,-3) node (u2) [norm, label={[label distance=0.08cm]270:\scriptsize $u_2$}] {}
                (2,-2) node (u3) [norm, label={[label distance=0.1cm]-5:\scriptsize $u_3$}] {}
                
                (-2.5,-3.5) node (u12v1) [invisible] {}
                (-1,-4) node (u12v2) [invisible] {}
                
                (2.5,-3.5) node (u32v1) [invisible] {}
                (1,-4) node (u32v2) [invisible] {}
                
                (-1,1) node (u22v1) [invisible] {}
                (-2,0) node (u22v2) [invisible] {}
                (-1.5,0.5) node (u22v3) [invisible] {}
                
                (u1)--(u12v1)
                (u2)--(u12v2)
                
                (u2)--(u32v2)
                (u3)--(u32v1)
                
                (u1)--(u22v2)
                (u3)--(u22v1)
                (u2)--(u22v3)
                
                (v)--(u1)
                (v)--(u2)
                (v)--(u3)
                ;   
                \end{scope}
        \end{tikzpicture}
        \caption{}
        \label{fig:Fv:G}
    \end{subfigure}
    \hspace{2.5em}
    \begin{subfigure}[b]{0.15\textwidth}
        \begin{tikzpicture}[scale=0.35, auto,swap]
            {
            \node [comp] (c1) at (-2,-4) {$C_1$};
            \node [comp] (c2) at (2,-4) {$C_2$};
            \node [comp] (c3) at (-1.90,0.80) {$C_3$};
            }         
            \begin{scope}[on background layer]
                \draw 
                (3,2) node (v) [norm, label={[label distance=0.05cm]90:\scriptsize $v$}] {}
                (-2,-2) node (u1) [norm, label={[label distance=0.05cm]180:\scriptsize $u_1$}] {}
                (0,-3) node (u2) [norm, label={[label distance=0.08cm]270:\scriptsize $u_2$}] {}
                (2,-2) node (u3) [norm, label={[label distance=0.1cm]-5:\scriptsize $u_3$}] {}
                
                (-2.5,-3.5) node (u12v1) [invisible] {}
                (-1,-4) node (u12v2) [invisible] {}
                
                (2.5,-3.5) node (u32v1) [invisible] {}
                (1,-4) node (u32v2) [invisible] {}
                
                (-1,1) node (u22v1) [invisible] {}
                (-2,0) node (u22v2) [invisible] {}
                (-1.5,0.5) node (u22v3) [invisible] {}
                
                (u1)--(u12v1)
                (u2)--(u12v2)
                
                (u2)--(u32v2)
                (u3)--(u32v1)
                
                (u1)--(u22v2)
                (u2)--(u22v3)
                (u3)--(u22v1)
                
                (v)--(u1)
                (v)--(u2)
                (v)--(u3)
                ;
                \draw[draw=MyColorOne, line width=0.5mm]
                (u1)--(u2)
                ;
            \end{scope}
        \end{tikzpicture}
        \caption{}
        \label{fig:Fv:Gu1u2}
    \end{subfigure}
    \hspace{2.5em}
    \begin{subfigure}[b]{0.15\textwidth}
        \begin{tikzpicture}[scale=0.35, auto,swap]
            {
            \node [comp] (c1) at (-2,-4) {$C_1$};
            \node [comp] (c2) at (2,-4) {$C_2$};
            \node [comp] (c3) at (-1.90,0.80) {$C_3$};
            }         
            \begin{scope}[on background layer]
                \draw 
                (3,2) node (v) [norm, label={[label distance=0.05cm]90:\scriptsize $v$}] {}
                (-2,-2) node (u1) [norm, label={[label distance=0.05cm]180:\scriptsize $u_1$}] {}
                (0,-3) node (u2) [norm, label={[label distance=0.08cm]270:\scriptsize $u_2$}] {}
                (2,-2) node (u3) [norm, label={[label distance=0.1cm]-5:\scriptsize $u_3$}] {}
                
                (-2.5,-3.5) node (u12v1) [invisible] {}
                (-1,-4) node (u12v2) [invisible] {}
                
                (2.5,-3.5) node (u32v1) [invisible] {}
                (1,-4) node (u32v2) [invisible] {}
                
                (-1,1) node (u22v1) [invisible] {}
                (-2,0) node (u22v2) [invisible] {}
                (-1.5,0.5) node (u22v3) [invisible] {}
                
                (u1)--(u12v1)
                (u2)--(u12v2)
                
                (u2)--(u32v2)
                (u3)--(u32v1)
                
                (u1)--(u22v2)
                (u2)--(u22v3)
                (u3)--(u22v1)
                
                (v)--(u1)
                (v)--(u2)
                (v)--(u3)
                ;
                \draw[draw=MyColorOne, line width=0.5mm]
                (u1)--(u2)
                (u2)--(u3)
                ;
            \end{scope}
        \end{tikzpicture}
        \caption{}
        \label{fig:Fv:Gu1u2u2u3}
    \end{subfigure}
    \hspace{2.5em}
    \begin{subfigure}[b]{0.15\textwidth}
        \begin{tikzpicture}[scale=0.35, auto,swap]
            {
            \node [comp] (c1) at (-2,-4) {$C_1$};
            \node [comp] (c2) at (2,-4) {$C_2$};
            \node [comp] (c3) at (-1.90,0.80) {$C_3$};
            }         
            \begin{scope}[on background layer]
                \draw 
                (3,2) node (v) [norm, label={[label distance=0.05cm]90:\scriptsize $v$}] {}
                (-2,-2) node (u1) [norm, label={[label distance=0.05cm]180:\scriptsize $u_1$}] {}
                (0,-3) node (u2) [norm, label={[label distance=0.08cm]270:\scriptsize $u_2$}] {}
                (2,-2) node (u3) [norm, label={[label distance=0.1cm]-5:\scriptsize $u_3$}] {}
                
                (-2.5,-3.5) node (u12v1) [invisible] {}
                (-1,-4) node (u12v2) [invisible] {}
                
                (2.5,-3.5) node (u32v1) [invisible] {}
                (1,-4) node (u32v2) [invisible] {}
                
                (-1,1) node (u22v1) [invisible] {}
                (-2,0) node (u22v2) [invisible] {}
                (-1.5,0.5) node (u22v3) [invisible] {}
                
                (u1)--(u12v1)
                (u2)--(u12v2)
                
                (u2)--(u32v2)
                (u3)--(u32v1)
                
                (u1)--(u22v2)
                (u2)--(u22v3)
                (u3)--(u22v1)
                
                (v)--(u1)
                (v)--(u2)
                (v)--(u3)
                ;
                \draw[draw=MyColorOne, line width=0.5mm]
                (u1)--(u2)
                (u2)--(u3)
                (u1)--(u3)
                ;
            \end{scope}
        \end{tikzpicture}
        \caption{}
        \label{fig:Fv:GFv}
    \end{subfigure}
    
    \caption{
    (a) the graph $G=(V,E)$.
    For simplicity's sake, the vertex set of the connected components $C_1,C_2,C_3$ are grouped.
    $\{u_1,u_2\},\{u_2,u_3\},\{u_1,u_2,u_3\}$ are all minimal separators of $G$ contained within $N_G(v)$.
    Notice that $|\textit{fill}(\{u_1,u_2\})|=
    |\textit{fill}(\{u_2,u_3\})|=1$ but $|\textit{fill}(\{u_1,u_2,u_3\})|=3$.
    (b), (c) respectively depict the graphs $G\oplus\{u_1u_2\}$ and $G\oplus\{u_1u_2,u_2u_3\}$.
    Note that $|\textit{fill}_{G\oplus\{u_1u_2,u_2u_3\}}(\{u_1,u_2,u_3\})|=1$.
    (d) the graph $G\oplus F^v$ where $F^v=\{u_1u_2,u_2u_3,u_1u_3\}$.
    Note that $v$ is a simplicial vertex in $G\oplus F^v$.
    The added edges in all the above graphs are drawn in green.
    }
    \label{alg:Fv}
\end{figure}

{
The following theorem summarises our results in this section and can be used to construct a minimum triangulation for a graph.

\begin{theorem}\label{lem:safeVertexRepeatedly}
Let $G$ be a graph and $\alpha$ be an elimination ordering over $G$ where for a fixed step $k\in\{1,\dots,|V|-1\}$, both of the following conditions are satisfied: 
\begin{enumerate}
    \item[(1)] For every step $i\in\{1,\dots,k\}$, $\alpha(i)$ satisfies one of the following conditions (where $F^{\alpha(i)}$ is the set of edges defined by Algorithm \ref{alg:Fv}):
    \begin{itemize}
        \item[(A)] $\alpha(i)$ is a simplicial vertex in $G_{i-1}$ or $G_{i-1}\oplus F^{\alpha(i)}$.
        \item[(B)] $\textit{deg}_{G_{i-1}}(\alpha(i))
        =\kappa(G_{i-1})$ and $\alpha(i)$ is an almost simplicial vertex in the graph $G_{i-1}$ or $G_{i-1}\oplus F^{\alpha(i)}$.
    \end{itemize}
    
    \item[(2)] $\alpha$ is any minimum elimination ordering over the graph $G_k$.
\end{enumerate}
Then $\alpha$ is a minimum elimination ordering over $G$.
\end{theorem}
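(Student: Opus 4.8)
The plan is to reduce the statement to a telescoping identity for the minimum fill-in and then invoke condition~(2). Write $F_i=\textit{fill}_{G_{i-1}}(N_{G_{i-1}}(\alpha(i)))$ for the edges that the elimination of $\alpha(i)$ adds at step $i$, so $F_i$ is exactly the set appearing in \cref{def:elimGalpha}; these sets are pairwise disjoint, hence the fill of $G^+_\alpha$ is $\sum_{i=1}^{|V|}|F_i|$. The heart of the argument is the single-step identity
\begin{equation}
\textit{mfi}(G_{i-1}) = |F_i| + \textit{mfi}(G_i), \qquad i\in\{1,\dots,k\}. \tag{$\star$}
\end{equation}
Granting $(\star)$, I would telescope over $i=1,\dots,k$ to obtain $\textit{mfi}(G)=\sum_{i=1}^{k}|F_i|+\textit{mfi}(G_k)$. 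Condition~(2), which says the suffix $\alpha(k+1),\dots,\alpha(|V|)$ is a minimum elimination ordering of $G_k$, gives $\textit{mfi}(G_k)=\sum_{i=k+1}^{|V|}|F_i|$. Combining the two yields $\textit{mfi}(G)=\sum_{i=1}^{|V|}|F_i|$, which is precisely the number of fill edges of $G^+_\alpha$, so $\alpha$ is a minimum elimination ordering of $G$.

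To prove $(\star)$ I would first show that $F_i$ is safe to add in $G_{i-1}$ and that $\alpha(i)$ is simplicial in $G_{i-1}\oplus F_i$, handling the four sub-cases of condition~(1). In case~(A) the set $F_i$ is either empty (if $\alpha(i)$ is simplicial in $G_{i-1}$) or equals $F^{\alpha(i)}$ (if $\alpha(i)$ is simplicial in $G_{i-1}\oplus F^{\alpha(i)}$, since then no missing edge of $N(\alpha(i))$ remains beyond those already placed in $F^{\alpha(i)}$); safety is immediate, the latter from \cref{ob:1}. In case~(B) with the almost-simplicial property already holding in $G_{i-1}$, safety is exactly \cref{lem:degkAlmostSimplicial}. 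The remaining case, (B) with the property holding only in $G_{i-1}\oplus F^{\alpha(i)}$, is the delicate one: here I would add the safe set $F^{\alpha(i)}$ first (\cref{ob:1}), observe that since $F^{\alpha(i)}\subseteq\textit{fill}_{G_{i-1}}(N_{G_{i-1}}(\alpha(i)))$ the neighbourhood and hence the degree of $\alpha(i)$ are unchanged, and argue that connectivity is preserved: $N_{G_{i-1}}(\alpha(i))$ is a separator of size $\kappa(G_{i-1})$ that remains a separator of the same size after adding edges internal to it, so $\kappa(G_{i-1}\oplus F^{\alpha(i)})=\kappa(G_{i-1})=\textit{deg}(\alpha(i))$. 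Then \cref{lem:degkAlmostSimplicial} applies to $G_{i-1}\oplus F^{\alpha(i)}$, and \cref{lem:safeRepeatedly} chains the two safe additions into safety of $F^{\alpha(i)}\cup\textit{fill}_{G_{i-1}\oplus F^{\alpha(i)}}(N(\alpha(i)))=F_i$ in $G_{i-1}$. In every case, once $F_i$ is added the neighbourhood of $\alpha(i)$ is a clique, so $\alpha(i)$ is simplicial in $G_{i-1}\oplus F_i$.

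Finally, $(\star)$ follows from two facts. Because $F_i$ is safe, \cref{lem:safeEdgesFirst} identifies the minimum triangulations of $G_{i-1}\oplus F_i$ with those of $G_{i-1}$ and hence gives $\textit{mfi}(G_{i-1})=|F_i|+\textit{mfi}(G_{i-1}\oplus F_i)$. Because $\alpha(i)$ is simplicial in $G_{i-1}\oplus F_i$, removing it costs nothing, i.e.\ $\textit{mfi}(G_{i-1}\oplus F_i)=\textit{mfi}(G_i)$ with $G_i=(G_{i-1}\oplus F_i)-\alpha(i)$ (the standard fact that a simplicial vertex contributes no fill, which also follows from \cref{lem:cliqueMinimalSeparator} using the clique $N(\alpha(i))$ to separate $\alpha(i)$ from the rest). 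Substituting yields $(\star)$.

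I expect the main obstacle to be the connectivity-preservation step in case~(B): one must verify carefully that adding $F^{\alpha(i)}$ neither raises $\kappa$ above $\textit{deg}(\alpha(i))$ nor disturbs the equality $\textit{deg}(\alpha(i))=\kappa$, and must dispatch the degenerate situation in which $N_{G_{i-1}}[\alpha(i)]$ is all of $G_{i-1}$ (where $G_{i-1}$ is complete and $\alpha(i)$ is already simplicial, so that case collapses to~(A)).
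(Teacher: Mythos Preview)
Your proposal is correct and follows essentially the same route as the paper: establish the single-step identity $\textit{mfi}(G_{i-1})=|F_i|+\textit{mfi}(G_i)$ by showing $F_i$ is safe (via \cref{ob:1}, \cref{lem:degkAlmostSimplicial}, \cref{lem:safeRepeatedly}) and then invoking \cref{lem:safeEdgesFirst} together with \cref{lem:cliqueMinimalSeparator} for the simplicial deletion, and telescope. Your case split (the four sub-cases of~(1)) differs cosmetically from the paper's split on $F^{\alpha(i)}=\emptyset$ versus $F^{\alpha(i)}\neq\emptyset$, and you are in fact more explicit than the paper about the connectivity-preservation step needed to apply \cref{lem:degkAlmostSimplicial} to $G_{i-1}\oplus F^{\alpha(i)}$; otherwise the arguments coincide.
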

\begin{proof}
To prove that $\alpha$ is a minimum elimination ordering, it suffices to prove that the number of edges added by $\alpha$ is $\textit{mfi}(G)$.
For simplicity, let us refer to the set of edges added at step $i\in\{1,\dots,|V|-1\}$ of $\alpha$ with $R_i$, i.e.,\ $R_i=\textit{fill}_{G_{i-1}}\,(N_{G_{i-1}}(\alpha(i)))$.
Notice that by definition, $F^{\alpha(i)}\subseteq R_i$ for every step $i\in\{1,\dots,|V|-1\}$ of $\alpha$.
Then to prove the statement of the theorem, we need to show that
$|\bigcup_{i\in\{1,\dots,|V|\}}R_i|=
\sum_{i\in\{1,\dots,|V|\}}|R_i|
=\textit{mfi}(G)$.

Let $k$ be a fixed step. Since $\alpha$ is a minimum elimination ordering over the graph $G_k$, we have $\textit{mfi}(G_k)=\sum_{i\in\{k+1,\dots,|V|\}}|R_i|$.
Then it only remains to prove the following:
\begin{equation}\label{eq:lem:safeVertexRepeatedly:0}
    \textit{mfi}(G)-
    \sum\nolimits_{i\in\{1,\dots,k\}}|R_i|
    =\textit{mfi}(G_k)
\end{equation}

We begin by showing that in the first step of $\alpha$, $R_1$ is safe to add in $G$.
By selection, $\alpha(1)$ satisfies one of the conditions (A) or (B) stated in the theorem.
Let us consider the following complementary cases in regard to $F^{\alpha(1)}$:
\begin{mycases}
    \case
    {
    Suppose that $F^{\alpha(1)}=\emptyset$.
    Then $\alpha(1)$ is simplicial in $G$, and hence $R_1=\emptyset$ and obviously safe to add, or $\alpha(1)$ is an almost simplicial vertex in $G$ and $\textit{deg}_G(\alpha(1))=\kappa(G)$.
    Then, by \cref{lem:degkAlmostSimplicial}, $R_1$ is safe to add in $G$.
    }
    
    \case
    {
    Suppose that $F^{\alpha(1)}\neq\emptyset$.
    Then either $\alpha(1)$ is simplicial in $G\oplus F^{\alpha(1)}$ implying that $\textit{fill}_G(N(\alpha(1)))=F^{\alpha(1)}$ which in turn means that $R_1$ is safe to add in $G$ or $\alpha(1)$ is almost simplicial in $G\oplus F^{\alpha(1)}$ and $\textit{deg}_G(\alpha(1))=\kappa(G)$.
    In the latter case, we know from Observation \ref{ob:1} that $F^{\alpha(1)}$ is safe to add in $G$ and \cref{lem:degkAlmostSimplicial} shows that $\textit{fill}_{G\oplus F^{\alpha(1)}}(N(\alpha(1)))$ is also safe to add in $G\oplus F^{\alpha(1)}$ consequently, the union of these edge sets, $R_1$, is also safe to add by \cref{lem:safeRepeatedly}.
    }
    
\end{mycases}

As a result of \cref{lem:safeEdgesFirst} we can state that:
\begin{equation}\label{eq:lem:safeVertexRepeatedly:1}
    \textit{mfi}(G)-|R_1|=\textit{mfi}(G\oplus R_1)
\end{equation}

Since $R_1=\textit{fill}_{G}\,(N(\alpha(1)))$, $\alpha(1)$ is simplicial in the graph $G\oplus R_1$.
Consequently, $N_{G\oplus R_1}\,(\alpha(1))$ is a clique minimal separator in $G\oplus R_1$ with components $G\oplus R_1[N[\alpha(1)]]$ and $(G\oplus R_1)-\alpha(1)$.
$G\oplus R_1[N[\alpha(1)]]$ is a complete graph (thus has minimum fill-in of zero) so by \cref{lem:cliqueMinimalSeparator} we can ignore this component:

\begin{equation}\label{eq:lem:safeVertexRepeatedly:2}
    \textit{mfi}(G\oplus R_1)=
    \textit{mfi}((G\oplus R_1)-\alpha(1))
    =\textit{mfi}(G_1)
\end{equation}
 
Then by combining \cref{eq:lem:safeVertexRepeatedly:1} and \cref{eq:lem:safeVertexRepeatedly:2} we can state the following:
\begin{equation}\label{eq:lem:safeVertexRepeatedly:3}
    \textit{mfi}(G)-|R_1|=\textit{mfi}(G_1)
\end{equation}

In the case where $k=1$, the theorem follows as this proves \cref{eq:lem:safeVertexRepeatedly:0}, so assume that $k\geq 2$.
In the following we argue that the following equation holds for any value of $l\in\{2,\dots,k\}$ thus proving \cref{eq:lem:safeVertexRepeatedly:0}.
\begin{equation}
    \textit{mfi}(G_l)-|R_l|=\textit{mfi}(G_l)
\end{equation}

By applying the same argument as above for step 2 of $\alpha$, we can state that:
\begin{equation}\label{eq:lem:safeVertexRepeatedly:4}
    \textit{mfi}(G_1)-|R_2|=\textit{mfi}(G_2)
\end{equation}

This combined with \cref{eq:lem:safeVertexRepeatedly:3} implies that:
\begin{equation}\label{eq:lem:safeVertexRepeatedly:5}
    \textit{mfi}(G)-(|R_1|+|R_2|)=\textit{mfi}(G_2)
\end{equation}

Surely, by repeating this argument for every $l$, until step $k$, we get \cref{eq:lem:safeVertexRepeatedly:0} as required.
The theorem thus follows.

\end{proof}

}
}

\section{$\tau$ Parameter in Relation With Treewidth and Connectivity}\label{sec:tfm:Ktrees}

{
In this section we prove that $\tau$ equals 0 for all graphs  with treewidth of at most two and for graphs, in which treewidth and vertex connectivity have the same value.
Halin graphs are an example of such graphs which are not chordal and have treewidth and vertex connectivity of three.

{
\begin{lemma}\label{lem:tw=kConnected}
Let $G$ be a graph where $\textit{tw}(G)=\kappa(G)$.
Then $\tau(G)=0$.
\end{lemma}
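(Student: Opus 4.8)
The plan is to exhibit a single elimination ordering that is \emph{simultaneously} a minimum elimination ordering and one realising the treewidth; its associated triangulation then witnesses $\tau(G)=0$. Write $k=\tw(G)=\kappa(G)$. By definition $\tw(G)=\min_{H'}\bigl(\omega(H')-1\bigr)$ over all triangulations $H'$, so every triangulation satisfies $\omega(H')\geq k+1$ and hence $\tw(H')=\omega(H')-1\geq\tw(G)$; consequently $\tau(G)\geq 0$ by \cref{def:tau}. It therefore suffices to produce one minimum triangulation of $G$ whose clique number equals $k+1$: this forces the clique-minimising minimum triangulation $H$ of \cref{def:tau} to satisfy $\tw(H)=k$, giving $\tau(G)=0$.

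First I would invoke the fact recorded in \cref{sec:defs} that there is an elimination ordering $\beta$ over $G$ with $\max_{v\in V}|\textit{madj}^+_\beta(v)|=\tw(G)=k$, and let $F_\beta$ denote its fill-in set, so that $G^+_\beta=(V,E\cup F_\beta)$ and $\sum_{v\in V}|\textit{madj}^+_\beta(v)|=|E|+|F_\beta|$. The goal is to show that this treewidth-realising $\beta$ is automatically a minimum elimination ordering. Applying \cref{cor:tw=kConnected} with vertex connectivity $k$, part (i) gives $|\textit{madj}^+_\beta(\beta(i))|\geq k$ for each step $i\in\{1,\dots,|V|-k-1\}$, while the choice of $\beta$ gives the reverse inequality $\leq k$; hence each of these first $|V|-k-1$ vertices contributes exactly $k$. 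By part (ii) the remaining $k+1$ vertices induce a complete graph and contribute exactly $\tfrac{k(k+1)}{2}$ in total. Summing,
\[
|E|+|F_\beta|=\sum_{v\in V}|\textit{madj}^+_\beta(v)|=k(|V|-k-1)+\frac{k(k+1)}{2}=k(|V|-k)+\frac{k(k-1)}{2}.
\]

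The key step is to compare this exact value with the lower bound $|E|+\textit{mfi}(G)\geq k(|V|-k)+\frac{k(k-1)}{2}$ from the final line of \cref{cor:tw=kConnected}. Since $|F_\beta|\geq\textit{mfi}(G)$ by minimality of the fill-in, chaining the estimates yields
\[
|E|+|F_\beta|\;\geq\;|E|+\textit{mfi}(G)\;\geq\;k(|V|-k)+\frac{k(k-1)}{2}\;=\;|E|+|F_\beta|,
\]
so every inequality is an equality; in particular $|F_\beta|=\textit{mfi}(G)$. Thus $\beta$ is a minimum elimination ordering, $G^+_\beta$ is a minimum triangulation, and by part (ii) its clique number is $k+1$, i.e.\ $\tw(G^+_\beta)=k=\tw(G)$. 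Combined with the reduction in the first paragraph, this gives $\tau(G)=0$.

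I expect the main obstacle to be conceptual rather than computational: the crux is recognising that a treewidth-realising ordering \emph{already saturates} the fill-in lower bound, so that once the counting of \cref{cor:tw=kConnected} is set up, the two parameters are pinned together by a squeeze argument. The only point requiring a little care is the degenerate end where $G$ is complete (then $|V|-k-1=0$, the first sum is empty, and $\textit{mfi}(G)=0$), but the argument goes through unchanged.
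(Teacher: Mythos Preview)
Your proof is correct and follows essentially the same approach as the paper: take a treewidth-realising elimination ordering $\beta$, use \cref{cor:tw=kConnected}(i)--(ii) to pin down $\sum_v|\textit{madj}^+_\beta(v)|$ exactly, and compare with the lower bound on $|E|+\textit{mfi}(G)$ from the same lemma to conclude that $G^+_\beta$ is a minimum triangulation with treewidth $k$. Your presentation makes the squeeze argument and the algebraic identity $k(|V|-k-1)+\tfrac{k(k+1)}{2}=k(|V|-k)+\tfrac{k(k-1)}{2}$ more explicit than the paper does, but the substance is identical.
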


\begin{proof}
Let $\textit{tw}(G)=k$.
Then, there exists an elimination ordering $\beta$ over $G$ where ${\max_{v\in V} |\textit{madj}^+_\beta(v)|} \leq k$.
This together with statement (i) from \cref{cor:tw=kConnected} implies that $|\textit{madj}^+_\beta(\beta(i))| =k$ for every step $i\in\{1,\dots,|V|-k-1\}$.
Combining this with statement (ii) in the same lemma, we have that
${\sum_{v\in V}|\textit{madj}^+_\beta(v)|} =k(|V|-k-1) +\frac{k(k+1)}{2}$.
Also, by \cref{cor:tw=kConnected}, $|E|+\textit{mfi}(G)\geq k(|V|-k-1) +\frac{k(k+1)}{2}=|E_{G^+_\beta}|$.
This shows that $G^+_\beta$ is a minimum triangulation over $G$.
Since $\textit{tw}(G^+_\beta) ={\max_{v\in V} |\textit{madj}^+_\beta(v)|} =k$ we have that $\textit{tw}(G^+_\beta)=\textit{tw}(G)$ proving that $\tau(G)=0$ following \cref{def:tau}.
\end{proof}
}

{
\cref{lem:tw=kConnected} actually demonstrates a stronger property of graphs where treewidth and connectivity have the same value.
This property is stated below:

\begin{corollary}\label{cor:kConnectedtwiffmfi}
Let $G$ be a graph with $\kappa(G)=\textit{tw}(G)$.
Then a minimal triangulation $H$ of $G$ is a minimum triangulation iff $\textit{tw}(H)=\textit{tw}(G)$.
\end{corollary}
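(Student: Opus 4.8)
The plan is to encode the minimal triangulation $H$ by an elimination ordering and then read off \emph{both} its edge count and its treewidth from the single sequence of values $|\textit{madj}^+_\alpha(\alpha(i))|$, so that the statements ``$H$ is minimum'' and ``$\tw(H)=\tw(G)$'' become the same numerical condition on that sequence. Write $k=\kappa(G)=\tw(G)$. Since $H$ is a minimal triangulation of $G$, the result of \cite{eliminationorderingMinimaltraingulation} quoted in \cref{sec:defs} supplies a minimal elimination ordering $\alpha$ with $G^+_\alpha=H$; as $H$ is chordal this $\alpha$ is a perfect elimination ordering of $H$, so $|E_H|=\sum_{v\in V}|\textit{madj}^+_\alpha(v)|$ while $\tw(H)=\max_{v\in V}|\textit{madj}^+_\alpha(v)|$. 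I would also record, from the proof of \cref{lem:tw=kConnected}, that the minimum is attained exactly at $|E|+\textit{mfi}(G)=k(|V|-k-1)+\frac{k(k+1)}{2}$.

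The heart of the argument is to apply \cref{cor:tw=kConnected} to this $\alpha$. Statement (i) forces $|\textit{madj}^+_\alpha(\alpha(i))|\ge k$ for every step $i\in\{1,\dots,|V|-k-1\}$, while statement (ii) says the last $k+1$ terms contribute exactly $\frac{k(k+1)}{2}$ (these arise from eliminating the complete graph $G_{|V|-k-1}$ on $k+1$ vertices, so they take the values $k,k-1,\dots,0$ and are in particular all $\le k$). Summing gives $|E_H|\ge k(|V|-k-1)+\frac{k(k+1)}{2}$, with equality \emph{if and only if} each of the first $|V|-k-1$ terms equals $k$. This single equivalence is the pivot for both implications.

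For the forward direction, if $H$ is a minimum triangulation then $|E_H|=|E|+\textit{mfi}(G)$ attains the minimum value above, so equality holds and every one of the first $|V|-k-1$ terms equals $k$; combined with the last $k+1$ terms being $\le k$, this yields $\max_{v}|\textit{madj}^+_\alpha(v)|=k$, i.e.\ $\tw(H)=k=\tw(G)$. For the converse, if $\tw(H)=k$ then $\max_{v}|\textit{madj}^+_\alpha(v)|=k$, so each of the first $|V|-k-1$ terms is simultaneously $\ge k$ and $\le k$, hence equals $k$; this is precisely the equality case, so $|E_H|=k(|V|-k-1)+\frac{k(k+1)}{2}=|E|+\textit{mfi}(G)$ and $H$ is a minimum triangulation.

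I expect the only delicate points to be the two soft facts invoked in the first paragraph: that a minimal elimination ordering producing $H$ may be regarded as a perfect elimination ordering of $H$, yielding the identities $|E_H|=\sum_{v}|\textit{madj}^+_\alpha(v)|$ and $\tw(H)=\max_{v}|\textit{madj}^+_\alpha(v)|$, and that the minimum value is exactly $k(|V|-k-1)+\frac{k(k+1)}{2}$. Both are already implicit in \cref{sec:defs} and in the proof of \cref{lem:tw=kConnected}, so once they are stated precisely the remainder is the elementary squeeze argument above, and I do not anticipate a substantive obstacle beyond making these citations exact.
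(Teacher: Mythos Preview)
Your proposal is correct and follows essentially the same approach as the paper's proof: both encode $H$ via a perfect elimination ordering, invoke \cref{cor:tw=kConnected} to get the lower bound $|\textit{madj}^+_\alpha(\alpha(i))|\ge k$ on the first $|V|-k-1$ steps together with the exact contribution $\frac{k(k+1)}{2}$ of the last $k+1$ steps, and combine this with the exact value of $|E|+\textit{mfi}(G)$ established in the proof of \cref{lem:tw=kConnected}. Your packaging of both implications as a single equality-case analysis of the squeeze $|E_H|\ge k(|V|-k-1)+\frac{k(k+1)}{2}$ is a little tidier than the paper's separate treatment of the two directions, but the substance is the same.
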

\begin{proof}
Let $k=\kappa(G)=\textit{tw}(G)$.
To prove the statement of the corollary in the `\textit{if}' direction, let $H$ be a minimum triangulation of $G$ and $\alpha$ a perfect elimination ordering over $H$.
Notice that by selection, $\alpha$ is a minimum elimination ordering over $G$.
As shown in the proof of \cref{lem:tw=kConnected}, we have
$\sum_{v\in V}|\textit{madj}^+_\alpha(v)|=k(|V|-k-1)+\frac{k(k+1)}{2}|$.
This, combined with the statements from \cref{cor:tw=kConnected}
implies that 
$|\textit{madj}^+_\alpha(\alpha(i))| =k$ for every step $i\in\{1,\dots,|V|-k-1\}$ and
$G^\alpha_{|V|-k-1}$ is a $K_{k+1}$, therefore 
$\max_{v\in V}|\textit{madj}^+_\alpha(v)|
=k=\textit{tw}(H)$.

Now, let us prove the corollary in the `\textit{only if}' direction.
To do so, let $H$ be a fixed minimal triangulation where $\textit{tw}(H)=\textit{tw(G)}$ and let $\beta$ be a perfect elimination ordering over $H$.
Once again, notice that $\beta$ is a minimal elimination ordering over $G$ where $\max_{v\in V}|\textit{madj}^+_\beta(v)|
=\textit{tw}(H)=k$.
By the selection of $\beta$ and statements (i) from \cref{cor:tw=kConnected}, $|\textit{madj}^+_\beta(\beta(i))|=k$ for every step ${i\in\{1,\dots,|V|-k-1\}}$.
This combined with statement (ii) from \cref{cor:tw=kConnected} implies that
$\sum_{v\in V}|\textit{madj}^+_\beta(v)|= k(|V|-k-1)+\frac{k(k+1)}{2} =|E|+\textit{mfi}(G)$ (the last part follows from the proof of  \cref{lem:tw=kConnected}).
In conclusion, we have that $H$ is a minimum triangulation as desired.
\end{proof}
}

\begin{lemma}\label{lem:tw<=2}
Let $G$ be a graph where $\textit{tw}(G)\leq 2$.
Then $\tau(G)=0$.
\end{lemma}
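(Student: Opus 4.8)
The plan is to prove that $\tau(G)=0$ for any graph $G$ with $\textit{tw}(G)\le 2$ by reducing to the already-proven \cref{lem:tw=kConnected}, which handles the case $\kappa(G)=\textit{tw}(G)$. The obstacle is that a graph with treewidth at most two need \emph{not} have connectivity equal to its treewidth: it may be disconnected by a cut vertex (so $\kappa(G)=1<2=\textit{tw}(G)$), or it may be a tree or forest (treewidth $1$, connectivity $1$, which is fine), or even have an almost-simplicial low-degree vertex whose elimination is forced. So \cref{lem:tw=kConnected} does not apply directly, and the main work is to split off the low-connectivity obstructions until what remains is a graph in which treewidth and connectivity coincide.

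First I would dispose of the easy subcase: if $\textit{tw}(G)\le 1$ then $G$ is a tree (recall all graphs here are connected), every minimal triangulation is $G$ itself, and $\tau(G)=0$ trivially. So assume $\textit{tw}(G)=2$. The strategy is to peel off vertices using the safe-edge machinery of \cref{sec:safe}, specifically \cref{lem:safeVertexRepeatedly}. A graph of treewidth $2$ has a minimal separator of size at most $2$; if the graph is $2$-connected then $\kappa(G)=2=\textit{tw}(G)$ and \cref{lem:tw=kConnected} finishes. So the remaining case is $\kappa(G)=1$, i.e.\ $G$ has a cut vertex. I would then invoke clique-separator decomposition: a single cut vertex is a clique (indeed minimal) separator, so by \cref{lem:cliqueMinimalSeparator} the minimum fill-in decomposes additively over the components $C\in\mathcal C(S)$ of $G[S\cup V_C]$, each of which still has treewidth at most $2$ and strictly fewer vertices. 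An analogous decomposition holds for treewidth (treewidth of a graph is the maximum treewidth over its blocks), so both parameters decompose over the blocks simultaneously.

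The cleanest way to package this is an induction on $|V|$. If $G$ is $2$-connected, \cref{lem:tw=kConnected} gives $\tau(G)=0$ directly. Otherwise fix a cut vertex $S=\{s\}$; since a single vertex is trivially a clique minimal separator, \cref{lem:cliqueMinimalSeparator} yields $\textit{mfi}(G)=\sum_{C\in\mathcal C(S)}\textit{mfi}(G[S\cup V_C])$. By the induction hypothesis each piece $G[S\cup V_C]$, being smaller and of treewidth at most $2$, admits a minimum triangulation whose clique size equals its treewidth plus one, i.e.\ realises treewidth $\le 2$. Gluing these triangulations back along the shared vertex $s$ produces a triangulation $H$ of $G$: it is chordal because a graph obtained by gluing chordal graphs at a single vertex is chordal (the clique-separator decomposition is reversible), its fill-in count is the sum $\textit{mfi}(G)$ so it is a minimum triangulation, and its maximum clique is the maximum over the pieces, hence of size at most $3$, so $\textit{tw}(H)=2=\textit{tw}(G)$. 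This exhibits a single triangulation achieving both optima, giving $\tau(G)=0$.

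The step I expect to be the main obstacle is justifying that the glued triangulation is genuinely simultaneously optimal for \emph{both} parameters: one must check that choosing, in each block, a minimum triangulation that also realises the block's treewidth is possible (this is exactly what the induction hypothesis $\tau=0$ on smaller graphs supplies) and that gluing preserves both chordality and the additive/maximum structure of the two parameters. An alternative, perhaps slicker, route avoids explicit gluing: use \cref{lem:safeVertexRepeatedly} by repeatedly eliminating a vertex $v$ of minimum degree. In a graph of treewidth $2$ there is always a vertex of degree at most $2$, and such a vertex is (almost) simplicial; if $\deg(v)=\kappa(G)$ it satisfies condition (B), and otherwise (a degree-$2$ vertex in a $2$-connected graph with $\kappa=2$) it is almost simplicial with $\deg(v)=\kappa(G)=2$, again condition (B), or simplicial, condition (A). Thus every vertex we eliminate is safe, the whole ordering is a minimum elimination ordering adding at most one edge per step, and the resulting triangulation never creates a clique larger than $3$, so $\textit{tw}(G^+_\alpha)=2$. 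Either route reduces the claim to facts already established; the delicate point is verifying the degree/connectivity bookkeeping so that condition (A) or (B) of \cref{lem:safeVertexRepeatedly} applies at \emph{every} step.
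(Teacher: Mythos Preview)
Your first route---reduce to the case $\textit{tw}(G)=2$, then decompose along cut vertices (equivalently, into biconnected blocks), apply \cref{lem:tw=kConnected} to each $2$-connected block, and glue the resulting triangulations at the shared cut vertices---is correct and is exactly what the paper does. The only cosmetic difference is that the paper takes the full block decomposition in one shot rather than recursing one cut vertex at a time, and it notes explicitly that every cycle of $G$ lies inside a single biconnected component (so the union of the per-block fill edges really is a triangulation of $G$); you should state this as well, since it is what makes the ``gluing preserves chordality and additivity of fill-in'' step go through.

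Your alternative route via \cref{lem:safeVertexRepeatedly}, however, has a genuine gap. Condition~(B) requires $\deg_{G_{i-1}}(\alpha(i))=\kappa(G_{i-1})$, not merely that $\alpha(i)$ be almost simplicial. When $\kappa(G_{i-1})=1$ but the minimum degree is $2$ (take, e.g., two $4$-cycles identified at a single vertex), a degree-$2$ vertex is almost simplicial but neither simplicial nor of degree equal to the connectivity, so neither (A) nor (B) applies. Your case split ``if $\deg(v)=\kappa(G)$ \ldots\ otherwise a degree-$2$ vertex in a $2$-connected graph with $\kappa=2$'' silently assumes $\kappa(G)\in\{\deg(v),\,2\}$, which misses precisely this situation. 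The fix is not obvious within the safe-edge framework alone; this is why the paper (and your first route) handles the $\kappa=1$ case by clique-separator decomposition rather than by elimination.
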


\begin{proof}
Clearly, $\tau(G)=0$ if $\textit{tw}(G)=1$ as trees are chordal.
Additionally, if $\textit{tw}(G)=\kappa(G)=2$ then following \cref{cor:tw=kConnected}, $\tau(G)=0$.
So, it remains to discuss the case where $\textit{tw}(G)=2$ and $\kappa(G)\leq 1$ (note that vertex connectivity of a graph cannot be greater than its treewidth).

Let $\mathcal{Q}$ be the set of all biconnected components of $G$.
We also point out that by definition, $\mathcal{Q}$ is the set of all maximal connected subgraphs of $G$ where every two subgraphs share at most one vertex.
For every component $Q\in\mathcal{Q}$, either (a) $Q$ is a complete graph on two vertices, or (b) $\kappa(Q)=2$.

We point out that every cycle of the graph $G$ is entirely contained within the biconnected components $Q$ of type (b) from $\mathcal{Q}$.
Therefore, given any minimum triangulation $H=(V,E\cup F)$ of $G$, for every edge $uv\in F$ both vertices $u,v$ are contained in some biconnected component $Q$ of type (b).

Now we define a minimum triangulation $H=(V,E\cup F)$
as it follows.
Let $Q=(V_Q,E_Q)$ be a component of type (b) from $\mathcal{Q}$.
Since $Q$ is a subgraph of $G$,  $\textit{tw}(Q)=2$ and then obviously also $\kappa(Q)=2$.
Following the proof of \cref{lem:tw=kConnected}, there exists a minimum triangulation $Q^*=(V_Q,E_Q\cup F_Q)$ of $Q$ where $\textit{tw}(Q^*)=2$.
We add $F_Q$ to $F$ and repeat the same process for every other component $Q$ of type (b) in $\mathcal{Q}$.

By construction, $H=(V,E\cup F)$ is a minimum triangulation of $G$ as 
$F$ is the set of all added edges that construct minimum triangulations over all components of type (b) of $\mathcal{Q}$.
Additionally, because every maximal clique of the triangulation $H=(V,E\cup F)$ is contained within one of the triangulations $Q^*$ defined above, we have $\textit{tw}(H)=\textit{tw}(G)=2$.
This concludes that $\tau(G)=0$ as required.
\end{proof}
}

\section{Future Work}\label{sec:conc}
Future work will be carried out to characterise graph classes with $\tau$ of 0, e.g., an the extension 
of \cref{lem:tw<=2} for graphs with treewidth of at most 3 or other well-studied graph classes.
These methods could also potentially be used to determine some lower/upper bound for minimum fill-in of some graph classes.

Additionally, we will be looking for other conditions for an edge set in a graph to be safe to add as well as generalisation of 
\cref{lem:degkAlmostSimplicial}, e.g., if a requirement on the degree and connectivity can be relaxed.

\bibliography{mybibfile}

\begin{thebibliography}{10}
\expandafter\ifx\csname url\endcsname\relax
  \def\url#1{\texttt{#1}}\fi
\expandafter\ifx\csname urlprefix\endcsname\relax\def\urlprefix{URL }\fi
\expandafter\ifx\csname href\endcsname\relax
  \def\href#1#2{#2} \def\path#1{#1}\fi

\bibitem{minDegree}
H.~L. Bodlaender, P.~Heggernes, Y.~Villanger, Faster parameterized algorithms
  for minimum fill-in, in: S.-H. Hong, H.~Nagamochi, T.~Fukunaga (Eds.),
  Algorithms and Computation, Springer Berlin Heidelberg, Berlin, Heidelberg,
  2008, pp. 282--293.

\bibitem{minimalTriangulationSurvey}
P.~Heggernes, Minimal triangulations of graphs: A survey, Discrete Mathematics
  306~(3) (2006) 297--317, minimal Separation and Minimal Triangulation.
\newblock \href {http://dx.doi.org/https://doi.org/10.1016/j.disc.2005.12.003}
  {\path{doi:https://doi.org/10.1016/j.disc.2005.12.003}}.

\bibitem{kloks1994treewidth}
T.~Kloks, \href{https://books.google.co.uk/books?id=PTY7ulm58XMC}{Treewidth:
  Computations and Approximations}, Lecture Notes in Computer Science,
  Springer, 1994.
\newline\urlprefix\url{https://books.google.co.uk/books?id=PTY7ulm58XMC}

\bibitem{manciniPhd}
F.~Mancini, Graph modification problems related to graph classes, Ph.D. thesis
  (2008).

\bibitem{Dar2019}
D.~Dereniowski, A.~Sta{\'{n}}ski,
  \href{https://doi.org/10.1007/s00224-018-9882-1}{On tradeoffs between width-
  and fill-like graph parameters}, Theory of Computing Systems 63~(3) (2019)
  450--465.
\newblock \href {http://dx.doi.org/10.1007/s00224-018-9882-1}
  {\path{doi:10.1007/s00224-018-9882-1}}.
\newline\urlprefix\url{https://doi.org/10.1007/s00224-018-9882-1}

\bibitem{myPhd}
M.~Ghahremani, On chordal triangulations and related graph parameters, {PhD}
  dissertation, School of Computing, University of Portsmouth, UK (2021).

\bibitem{Introduction2}
J.~A. Bondy, U.~S.~R. Murty, et~al., Graph Theory, Vol. 244, Springer, 2008.

\bibitem{Introduction1}
D.~B. West, et~al., Introduction to graph theory, Vol.~2, Pearson, 2000.

\bibitem{bodLaenderTW}
H.~L. Bodlaender, Treewidth: Characterizations, applications, and computations,
  in: F.~V. Fomin (Ed.), Graph-Theoretic Concepts in Computer Science, Springer
  Berlin Heidelberg, Berlin, Heidelberg, 2006, pp. 1--14.

\bibitem{cliqueminimalseparatordecomposition}
A.~Berry, R.~Pogorelcnik, G.~Simonet, An introduction to clique minimal
  separator decomposition, Algorithms 3 (2010) 197--215.
\newblock \href {http://dx.doi.org/10.3390/a3020197}
  {\path{doi:10.3390/a3020197}}.

\bibitem{fulkerson1965}
D.~R. Fulkerson, O.~A. Gross,
  \href{https://projecteuclid.org:443/euclid.pjm/1102995572}{Incidence matrices
  and interval graphs.}, Pacific J. Math. 15~(3) (1965) 835--855.
\newline\urlprefix\url{https://projecteuclid.org:443/euclid.pjm/1102995572}

\bibitem{eliminationorderingMinimaltraingulation}
T.~Ohtsuki, L.~K. Cheung, T.~Fujisawa,
  \href{http://www.sciencedirect.com/science/article/pii/0022247X76901827}{Minimal
  triangulation of a graph and optimal pivoting order in a sparse matrix},
  Journal of Mathematical Analysis and Applications 54~(3) (1976) 622 -- 633.
\newblock \href
  {http://dx.doi.org/https://doi.org/10.1016/0022-247X(76)90182-7}
  {\path{doi:https://doi.org/10.1016/0022-247X(76)90182-7}}.
\newline\urlprefix\url{http://www.sciencedirect.com/science/article/pii/0022247X76901827}

\end{thebibliography}

\end{document}